\DeclareMathOperator{\wg}{wg}
\DeclareMathOperator*{\E}{\mathbb{E}}
\DeclareMathOperator{\weight}{weight}
\DeclareMathOperator{\supp}{supp}
\DeclareMathOperator{\flips}{flips}
\newcommand{\nc}{\newcommand}
\newcommand{\ot}{\otimes}
\nc{\wt}[1]{\widetilde{#1}}
\nc{\polylog}{\operatorname{polylog}}
\nc{\expval}[2]{\< #2 | #1  | #2 \>}
\newcommand{\bra}[1]{\langle #1|}
\newcommand{\ket}[1]{|#1\rangle}
\newcommand{\braket}[2]{\langle #1|#2\rangle}
\def\t{\theta}
\newcommand*{\defeq}{\mathrel{\rlap{%
                     \raisebox{0.3ex}{$\m@th\cdot$}}%
                     \raisebox{-0.3ex}{$\m@th\cdot$}}%
                     =}
\newcommand{\tr}{{\rm tr}}
\newcommand {\br} [1] {\ensuremath{ \left( #1 \right) }}
\newcommand {\cbr} [1] {\ensuremath{ \left\lbrace #1 \right\rbrace }}
\newcommand {\sbr} [1] {\ensuremath{ \left[ #1 \right] }}
\newcommand {\abr} [1] {\ensuremath{ \left| #1 \right| }}
\nc{\dg}{\dagger}
\nc{\be}{\begin{equation}}
\nc{\ee}{\end{equation}}
\newcommand{\bbR}{\mathbb{R}}
\newcommand{\bbC}{\mathbb{C}}
\nc{\pd}{\partial}
\nc{\proj}[1]{|#1\rangle\!\langle #1 |} 
\newtheorem{theorem}{Theorem}
\newtheorem{lemma}{Lemma}
\newtheorem{corollary}{Corollary}
\def\bx{\mathbf{x}}
\def\bn{\mathbf{n}}
\def\bze{\mathbf{0}}
\def\but{\mathbf{\uptheta}}
\def\>{\rangle}
\def\<{\langle}
\def\cM{{\cal M}}
\def\cS{{\cal S}}
\def\cC{{\cal C}}
\def\g{\gamma}
\def\d{\delta}
\def\ep{\varepsilon}
\def\eps{\varepsilon}
\def\t{\theta}
\def\p{\pi}
\def\s{\sigma}
\def\ta{\tau}
\def\G{\Gamma}
\def\vg{\vec{\gamma}}
\def\vG{\vec{\Gamma}}
\def\ba#1\ea{\begin{align}#1\end{align}}
\renewcommand{\mtt}[1]{\mathtt{#1}}
\begin{document}


\title{Quantifying the barren plateau phenomenon for a model of unstructured variational ans\"{a}tze}


\author{John C. Napp}
\email{john.napp@gmail.com}
\affiliation{Center for Theoretical Physics, Massachusetts Institute of Technology, Cambridge, MA 02139, USA}


\date{\today}

\begin{abstract}
Quantifying the flatness of the objective-function landscape associated with unstructured parameterized quantum circuits is important for understanding the performance of variational algorithms utilizing a ``hardware-efficient ansatz'', particularly for ensuring that a prohibitively flat landscape---a so-called ``barren plateau''---is avoided. For a model of such ans\"{a}tze, we relate the typical landscape flatness to a certain family of random walks, enabling us to derive a Monte Carlo algorithm for efficiently, classically estimating the landscape flatness for any architecture. The statistical picture additionally allows us to prove new analytic bounds on the barren plateau phenomenon, and more generally provides novel insights into the phenomenon's dependence on the ansatz depth, architecture, qudit dimension, and Hamiltonian combinatorial and spatial locality. Our analysis utilizes techniques originally developed by Dalzell et al. \cite{dalzell2020random} to study anti-concentration in random circuits.  
\end{abstract}


\maketitle

\section{Introduction}
A leading candidate class of algorithms for obtaining a quantum speedup for a practical problem in the near term is that of variational hybrid quantum-classical algorithms (see \cite{cerezo2020variational} for a review). In this setting, one assumes access to a quantum device capable of implementing some parameterized family of quantum circuits $U(\but)$ for $\but \in \bbR^p$, and the goal is to minimize an objective function $f$ of the form $f(\but) = \expval{U(\but)^\dag H U(\but)}{0^n}$ over the set of feasible parameters. While perhaps the most well-known algorithms of this form are the Variational Quantum Eigensolver \cite{peruzzo2014variational} (for estimating the ground state energy associated with some physical Hamiltonian) and the Quantum Approximate Optimization Algorithm \cite{farhi2014quantum} (for solving combinatorial optimization problems), it is possible to encode a wide variety of disparate computational problems in this form by making an appropriate choice of \emph{objective observable} $H$. With $H$ and the parameterized circuit $U(\but)$ having been set, one generally proceeds with the minimization of the objective function $f$ via an interaction between the quantum device and a classical controller; the quantum device is used to estimate the objective function or its derivatives at any point in the parameter space, while the classical controller is used to perform an ``outer loop'' stochastic optimization of $f$ over the parameter space.

This framework is appealing from the perspective of near-term quantum applications due to its great flexibility; in addition to the flexibility with respect to the computational problem encoded in $H$, there is also an enormous flexibility with respect to the available quantum hardware. While the parameterized circuit $U(\but)$ may be chosen in some highly-structured, theoretically-motivated way for the problem at hand---which could necessitate quantum resources beyond what are available in the near term---another strategy is simply to choose $U(\but)$ to be naturally compatible with the available hardware. An ansatz chosen in this manner is deemed a \emph{hardware-efficient ansatz} (HEA). As an HEA ignores the structure inherent to the problem, it typically ``looks random'' from the perspective of $H$. The success of a variational algorithm depends largely on (1) how well the ansatz $U(\but)$ can express the ground state of $H$, and (2) the geometric properties of the objective function landscape influencing how easily optimization may be performed. The focus of this work is on the latter point, particularly in relation to the HEA.

Despite the allure of the HEA for providing a possible route to practical quantum speedups in the near term, it is well-known that such highly-unstructured, random-looking ans\"{a}tze suffer from a drawback known as the \emph{barren plateau phenomenon} \cite{McClean_2018}, which is the tendency of the objective function landscape to look extremely flat almost everywhere.

The flatness of the objective function landscape is indeed a key property affecting the performance of a variational quantum algorithm. If the landscape is extremely flat, then intuitively the classical ``outer loop'' optimization will have trouble finding a good local minimum. Furthermore, gradient-based optimization approaches would need to take an enormous number of measurements at a typical point in parameter space to estimate $\nabla f$ at that point with small relative error. On the other hand, if the landscape is sufficiently ``bumpy'', then it is typically possible to obtain a good estimate of $\nabla f$ from a reasonable number of measurements, and an optimization approach like stochastic gradient descent can be used to descend to a local minimum. The importance of the barren plateau phenomenon is also unsurprising given its close relation to the famous ``vanishing gradient problem'' \cite{hochreiter_2001} encountered in the training of (classical) deep neural networks. Due to its direct relevance to the performance of optimization algorithms, the barren plateau phenomenon has been studied directly or played a role in a large number of recent numerical and analytical works \cite{McClean_2018, Grant_2019, Cerezo_2021_cost, Holmes_2021, Uvarov_2021, Cerezo_2021_higher, Abbas_2021, Volkoff_2021, skolik2021layerwise, Zhao_2021, sharma2020trainability, wang2021noiseinduced, marrero2021entanglement, zhang2020trainability, arrasmith2020effect, pesah2020absence, patti2020entanglement, holmes2021connecting, arrasmith2021equivalence, haug2021optimal, larocca2021diagnosing, kim2021entanglement, anschuetz2022critical, sack2022avoiding, rad2022surviving}.

But despite the importance of the phenomenon and the plethora of recent work, there remain practical and theoretical holes in our understanding. For instance, on the practical side, for a given circuit architecture and observable $H$ there is no known general method for efficiently estimating the flatness of the associated objective function landscape. On the theoretical side, analytic results tend to either apply only to narrow special cases of architectures, or have general applicability but be quite loose. For example, the analytic bounds in the original paper \cite{McClean_2018} apply to the practically unrealistic setting of ans\"{a}tze which form exact unitary 2-designs. This was improved in \cite{holmes2021connecting}, which derived upper bounds on the gradient for approximate 2-designs; however, there was no lower bound, and in general ans\"{a}tze need not be approximate 2-designs to experience barren plateaus. In \cite{Cerezo_2021_cost}, the authors proved some upper and lower bounds on the typical magnitude of the gradient for a model of a one-dimensional, $O(\polylog n)$-depth HEA where the objective observable $H$ is either spatially local and few-body or is a tensor product of non-trivial projectors on each qubit. The techniques of the present work allow us to essentially generalize some of their results to arbitrary depth and more general Hamiltonians. In \cite{Uvarov_2021} the authors proved a very general lower bound on the variance of the gradient, but one implication of the present work is that this bound can be very loose. More recently, \cite{Zhao_2021} employed the ZX-calculus to derive bounds for four special cases of circuit architecture. Notably, it was not well-understood in general how the flatness of the landscape scales with the variational circuit depth, qudit dimension, or the locality (combinatorial or spatial) of $H$.

In this work we help ameliorate some of these gaps in knowledge. We first introduce a model of highly-unstructured HEAs which our results are derived with respect to, which is fundamentally similar to the models used in \cite{Cerezo_2021_cost, Uvarov_2021} but allows for generalization beyond their setup (e.g. to more general types of parameterized gates and to qudits with dimension greater than two). Within this model, for arbitrary architectures we give an efficient Monte Carlo algorithm for estimating the typical magnitude of $\nabla f$, and additionally derive general upper and lower analytic bounds on $\nabla f$. Stronger analytic bounds are obtained for 1D architecture, where we effectively generalize some of the results of \cite{Cerezo_2021_cost}. Perhaps the most important novel theoretical implications of our results are that they show or suggest that, typically, (1) the gradient decays exponentially in the circuit depth; (2) the gradient decays exponentially in the Hamiltonian locality; and (3) the gradient decays polynomially in the local qudit dimension. As elaborated upon in the discussion section, (2) in turn suggests that even for architectures in which barren plateaus are avoided, gradient descent may tend to find local optima which neglect higher-order terms of $H$; additionally, it implies that global observables are \emph{always} associated with barren plateaus in this model, regardless of architecture (generalizing a result of \cite{Cerezo_2021_cost} for narrower classes of architectures and observables). Point (1) implies that the landscape flatness does not generally saturate at the 2-design depth. Our work also resolves an open question posed in \cite{Cerezo_2021_cost} on the relative importance of the combinatorial versus spatial locality of $H$ in determining the flatness of the landscape; combinatorial locality is generally the dominant factor, but the spatial locality structure also contributes as well in a way that is easy to intuit after deriving the Monte Carlo algorithm for barren plateaus below. 

\section{Setup and notation}
We now describe the HEA model. The parameterized circuit $U(\but)$ is assumed to act on $n$ qudits of local dimension $q$. The starting state is assumed to be $\ket{0}^{\ot n}$. $U(\but)$ is assumed to consist of two types of gates, which we call \emph{entangling gates} and \emph{parameterized gates}. Entangling gates act non-trivially on two (possibly non-adjacent) sites and are randomly chosen according to any measure on U$(q^2)$ that forms a 2-design. A parameterized gate is of the form $W_l e^{-i A \theta} W_r$ where $A$ is Hermitian and acts non-trivially on at most two sites, $\theta\in \bbR$ is the parameter, and $W_l$ and $W_r$ are arbitrary fixed gates acting on the same sites as $A$. We assume that $e^{-i A \theta}$ is periodic in $\t$ with period $2\pi$ (re-scaling $A$ if necessary), and take $[0,2\pi]$ as the feasible set for each parameter. The model additionally obeys the following constraints. First, each qudit is acted on by an entangling gate at least once. Second, each entangling gate acting on qudits $i$ and $j$ may be preceded and succeeded by an arbitrary number of parameterized gates acting on one or both of these sites, but parameterized gates which cannot be placed in this way are not allowed. (There is no constraint on the location of entangling gates.) Third, for our \emph{lower bounds} on the gradient, we additionally assume that the final gates to act on any given qudit $i$ is the parameterized sequence $\Sigma_1^{\alpha_i q/ 2\pi} \Sigma_3^{\beta_i q / 2\pi}$ where $\alpha_i$ and $\beta_i$ are parameters, and $\Sigma_1$ and $\Sigma_3$ are the so-called shift and clock matrices, respectively, defined by $\bra{k}\Sigma_1\ket{l} \defeq \delta_{k-1,l}$ and  $\bra{k}\Sigma_3\ket{l} \defeq e^{2\pi i k/q}\delta_{k,l}$.\footnote{These parameterized gates may be expressed in the required form and have period $2\pi$. Note that this final assumption is not necessary for the upper bounds, and additionally, for the case of qubits ($q=2$), we may replace $\Sigma_1^{\alpha_i q/ 2\pi} \Sigma_3^{\beta_i q / 2\pi}$ with the sequence of Pauli rotations $e^{i \alpha_i X/2}e^{i \beta_i Z/2}$ and the results remain unchanged.}

We additionally define a few more pieces of notation. For any valid variational circuit $U(\but)$, we define the associated random circuit $\wt{U}$ to be the same but with all parameterized gates removed. The variable $m$ is used to denote the number of gates in $\wt{U}$. The variable $d$ denotes the depth (i.e. number of layers of parallel gates) of $\wt{U}$. In analogy with \cite{dalzell2020random}, variable $r$ denotes the \emph{regular connectivity} of $\wt{U}$, defined to be the maximum number of layers of parallel gates that must be applied before some gate acts between an arbitrary proper subset of qudits and its complement.  Define $p$ to be the number of parameters, so the feasible set in parameter space is $[0,2\pi]^p$.  The model is illustrated by example in \Cref{fig:circuit}. We also define $[p] \defeq \{0, \dots, p-1\}$, and $\bze$ denotes an all-zeros vector.

\begin{figure}
    \includegraphics[width=0.8\columnwidth]{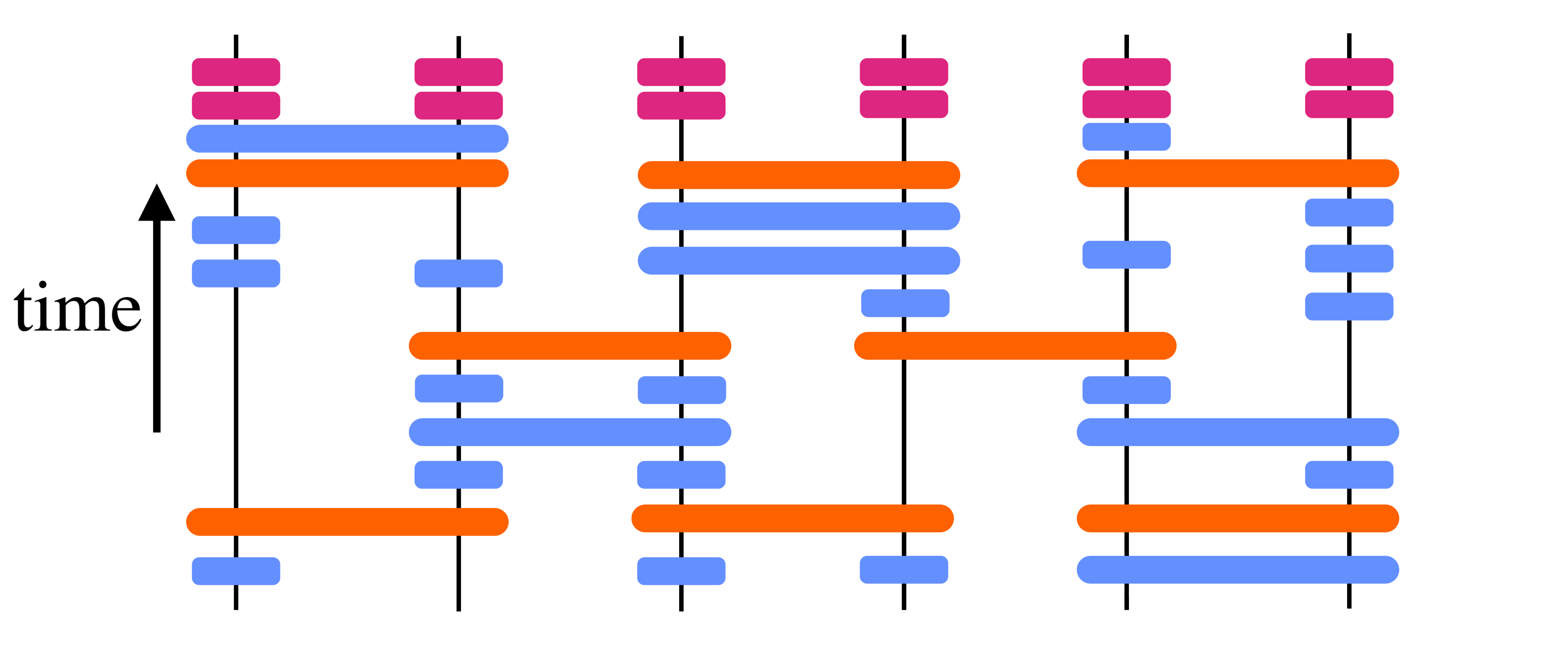}
    \caption{Example of an $n=6$ qudit architecture compatible with the model. Time flows from bottom to top. Orange represents entangling gates, blue represents arbitrary parameterized gates, and pink represents the final parameterized gates to act as specified in the main text. In this example, there are $m=8$ entangling gates, there are $p=36$ variational parameters, the entangling gate depth is $d=3$, and the entangling gate regular connectivity is $r=2$. While a 1D architecture is illustrated for simplicity, no such constraint is required for the Monte Carlo algorithm or general analytic bounds, although stronger bounds are obtained for the 1D setting.}
    \label{fig:circuit}
\end{figure}

Let $\{M_0, \dots, M_{q^2-1}\}$ be any set of matrices which form an orthogonal basis for $\bbC^{q \times q}$ w.r.t. the Hilbert--Schmidt inner product, satisfying $M_0 = I$ (the identity matrix) and $\tr (M_i^\dg M_j) = q \d_{i,j}$. In the case of qubits (i.e. $q=2$), for example, these may be the Pauli matrices. For higher $q$, $\{\Sigma_1^i \Sigma_3^j\}_{(i,j)\in [q]^2}$ would do. Now, $H$ may be decomposed as $H = \sum_{\bx} c_{\bx} M_{\bx}$, where we have introduced the notation
\be
M_{\bx} \defeq M_{\bx_1} \ot \dots \ot M_{\bx_n},
\ee
where $\bx \defeq (\bx_1, \dots, \bx_n)$ and  $\bx_{i} \in [q^2]$. Note that $c_\bx = q^{-n} \tr(M_\bx^\dg H)$. Throughout this work we will assume without loss of generality that $c_\bze = 0$ (i.e. $H$ is traceless), as changing $c_\bze$ corresponds merely to shifting the objective function by a uniform constant which in no way affects the geometric features of the landscape. $| \bx |$ is defined to be the number of non-zero elements of $\bx$, and $\supp(\bx)$ is defined to be the set of indices on which $\bx$ is non-zero.

Letting $f_V(\but)$ denote the objective function induced by a particular realization of entangling gates, which we denote by $V$, we are interested in the typical flatness of the optimization landscape $f_V (\but)$, with respect to both the random choice of entangling gates and uniformly over the parameter space. While a natural measure of the typical flatness is $\E_V \E_{\but} \| \nabla f_V(\but) \|^2$ (where $\E_V$ averages over the realization of entangling gates and $\E_\but$ averages over the parameter space), the next lemma shows that $\E_V \E_{\but} f_V(\but)^2$ is also a good measure by directly relating it to the former quantity. In the remainder of the work, we directly study the latter quantity as it is more directly accessible with our techniques.
\begin{lemma}\label{lem:Zeroth_First_Inequalities}
With $f_V(\but)$ as defined previously,
\begin{align*}
\E_V \E_\but f_V(\but)^2 &\leq \E_V \E_\but \| \nabla f_V (\but) \|^2 \\ 
&\leq 4\br{\sum_i \| A_i \|^2 }\| H \|  \sqrt{\E_V \E_\but f_V(\but)^2}.
\end{align*}
\end{lemma}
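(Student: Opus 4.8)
The plan is to treat the two inequalities separately; both reduce to elementary facts about the analytic structure of $f_V$ as a function of $\but$ on the torus $[0,2\pi]^p$, combined with operator-norm estimates.

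For the first (lower) inequality, I would exploit that, since each $e^{-iA_j\theta}$ has period $2\pi$, the generator $A_j$ has integer spectrum, so $f_V(\but)$ is a multivariate trigonometric polynomial with \emph{integer} frequency vectors: $f_V(\but)=\sum_{\bn\in\mathbb Z^p}\hat f^{(V)}_{\bn}e^{i\bn\cdot\but}$ (a finite sum). Parseval then gives $\E_\but f_V^2=\sum_\bn|\hat f^{(V)}_\bn|^2$ and $\E_\but\|\nabla f_V\|^2=\sum_\bn\|\bn\|^2|\hat f^{(V)}_\bn|^2$, and since $\|\bn\|^2\ge 1$ for every nonzero integer vector $\bn$,
\[
\E_\but\|\nabla f_V\|^2\ \ge\ \sum_{\bn\neq\bze}|\hat f^{(V)}_\bn|^2\ =\ \E_\but f_V^2-(\E_\but f_V)^2 .
\]
It therefore suffices to show $\E_\but f_V=0$ for every realization $V$. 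I would prove this by averaging over the terminal single-qudit parameters $\alpha_i,\beta_i$ alone: writing $\Sigma_1=e^{2\pi i P/q}$ and $\Sigma_3=e^{2\pi i N/q}$ for the qudit ``momentum'' and ``number'' operators, the uniform average of $\Sigma_1^{\alpha_i q/2\pi}=e^{i\alpha_i P}$ over $\alpha_i\in[0,2\pi]$ is the rank-one projector $\proj{\tilde 0}_i$, and likewise for $\beta_i$; using the Weyl relation $\Sigma_3\Sigma_1=e^{2\pi i/q}\Sigma_1\Sigma_3$ (extended to continuous powers via the exponential forms) one checks that conjugating $H$ by these two gates and averaging over $\alpha_i,\beta_i$ replaces the $i$-th tensor factor of $H$ by its component proportional to the identity. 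Applying this to every qudit collapses the effective observable to $q^{-n}\tr(H)\,I$, which vanishes since $H$ is traceless; hence $\E_\but f_V=\tr(H)/q^n=0$. Averaging the displayed inequality over $V$ yields the first bound.

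For the second (upper) inequality, I would integrate by parts on the torus. Because $f_V$ is $2\pi$-periodic in each $\theta_j$, $\E_\but[(\partial_j f_V)^2]=-\E_\but[f_V\,\partial_j^2 f_V]$ with no boundary contribution, so summing over $j$,
\[
\E_\but\|\nabla f_V\|^2=\E_\but\!\big[f_V\,(-\Delta f_V)\big],\qquad \Delta\defeq\textstyle\sum_j\partial_j^2 .
\]
The key computation is the second derivative: writing $\ket{\psi(\but)}=U(\but)\ket{0^n}$ and letting $\widetilde A_j$ denote $A_j$ conjugated by the product of the fixed left gate $W_l$ of the $j$-th parameterized gate and all gates applied after it, one has $\partial_j\ket\psi=-i\widetilde A_j\ket\psi$ and, crucially, $\partial_j\widetilde A_j=0$, whence $\partial_j^2 f_V=-\bra\psi[\widetilde A_j,[\widetilde A_j,H]]\ket\psi$ and
\[
-\Delta f_V(\but)=\bra{\psi(\but)}\,\textstyle\sum_j[\widetilde A_j,[\widetilde A_j,H]]\,\ket{\psi(\but)} .
\]
Since $\|\widetilde A_j\|=\|A_j\|$ and $\|[X,Y]\|\le 2\|X\|\,\|Y\|$, this gives the uniform bound $|\Delta f_V(\but)|\le 4\big(\sum_j\|A_j\|^2\big)\|H\|$. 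Two applications of Cauchy--Schwarz then finish the job: first over $\but$, $\E_\but\|\nabla f_V\|^2\le\sqrt{\E_\but f_V^2}\,\sqrt{\E_\but(\Delta f_V)^2}\le 4\big(\sum_j\|A_j\|^2\big)\|H\|\sqrt{\E_\but f_V^2}$, and then over $V$ using concavity of the square root, $\E_V\sqrt{\E_\but f_V^2}\le\sqrt{\E_V\E_\but f_V^2}$.

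I expect the main obstacle to be the verification that $\E_\but f_V=0$: it is the only step that genuinely uses the ansatz structure (specifically the terminal gates $\Sigma_1^{\alpha_i q/2\pi}\Sigma_3^{\beta_i q/2\pi}$, or in the qubit case the terminal rotations $e^{i\alpha_i X/2}e^{i\beta_i Z/2}$), and it requires care with the Weyl commutation relations under continuous powers of the clock and shift matrices. Everything else—the Parseval identities, the integration by parts, the double-commutator identity, and the norm estimates—is routine. Note that the upper bound itself does not need $\E_\but f_V=0$ and holds without the terminal-gate assumption.
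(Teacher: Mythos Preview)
Your proposal is correct and follows essentially the same route as the paper. For the lower bound both arguments use the Fourier/Parseval identity $\E_\but\|\nabla f_V\|^2=\sum_\bn\|\bn\|^2|\hat f^{(V)}_\bn|^2\ge\E_\but f_V^2-(\E_\but f_V)^2$ and then kill the zero mode by averaging over the terminal $\alpha_i,\beta_i$ via the Weyl relations; for the upper bound the paper applies Cauchy--Schwarz to the Fourier coefficients componentwise (writing $\sum_\bn\bn_j^2|\hat f_\bn|^2\le\sqrt{\sum_\bn\bn_j^4|\hat f_\bn|^2}\sqrt{\sum_\bn|\hat f_\bn|^2}$ with the $V$-average already inside) rather than your integration-by-parts-then-Jensen, but these are equivalent formulations of the same estimate, and both rely on the identical pointwise bound $|\partial_j^2 f_V|\le 4\|A_j\|^2\|H\|$.
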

Intuitively, such a relation holds due to the fact that $f(\but)$ is periodic and has bounded first and second derivatives; as a result of these properties, the function value has a low variance over the parameter space if and only if the gradient is typically small. We defer the proof to \Cref{app:ProofOfGradientInequalities}. A result of a similar flavor was previously shown in \cite{arrasmith2021equivalence}; however, this result does not straightforwardly extend to the present setting, and our proof strategy is different than theirs. We record another lemma here, where we introduce the convenient notation $g_\bx \defeq \E_V \E_{\but} \expval{U(\but)^\dag M_\bx U(\but)}{0}^2$.
\begin{lemma}\label{lem:expansion}
\be
\E_V \E_\but f_V(\but)^2 = \sum_\bx |c_\bx|^2 g_\bx.
\ee
\end{lemma}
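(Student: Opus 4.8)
The plan is to expand $H$ in the basis $\{M_\bx\}$, pass to a second-moment (``replica'') computation, and show that the off-diagonal terms average to zero. First I would write $f_V(\but)=\sum_\bx c_\bx\,\bra{0^n}U(\but)^\dag M_\bx U(\but)\ket{0^n}$; since $H$ is Hermitian, $f_V$ is real-valued, so $f_V(\but)^2=f_V(\but)\,\overline{f_V(\but)}=\sum_{\bx,\by}c_\bx\overline{c_\by}\,\bra{0^n}U^\dag M_\bx U\ket{0^n}\,\overline{\bra{0^n}U^\dag M_\by U\ket{0^n}}$. Taking $\E_V\E_\but$, the lemma reduces to showing that $\E_V\E_\but\!\big[\bra{0^n}U^\dag M_\bx U\ket{0^n}\,\overline{\bra{0^n}U^\dag M_\by U\ket{0^n}}\big]$ vanishes whenever $\bx\ne\by$; for $\bx=\by$ this expectation is, by construction, $\E_V\E_\but\abr{\bra{0^n}U^\dag M_\bx U\ket{0^n}}^2=g_\bx$.

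For the vanishing I would work in the replica picture. Writing $\sigma\defeq U\proj{0^n}U^\dag$, the expectation in question equals $\tr\!\big[(M_\bx\ot\overline{M_\by})\,\E_V\E_\but(\sigma\ot\sigma^T)\big]$, and $\E(\sigma\ot\sigma^T)$ is obtained by applying, gate by gate, the completely positive map $X\mapsto\E[(G\ot\bar G)\,X\,(G\ot\bar G)^\dag]$ to $(\proj{0^n})^{\ot 2}$, the average being over $\theta$ for a parameterized gate and over the $2$-design for an entangling gate. Two observations make this tractable. (i) For a second moment a $2$-design reproduces the Haar average, which is left- and right-invariant; since the model's placement rule makes every parameterized gate a preceding or succeeding gate of some entangling gate, acting only on that gate's two sites, each parameterized gate can be absorbed into its neighbouring entangling gate, so that without loss of generality $U$ is a circuit of independent Haar-random two-qudit gates in which every qudit is entangled at least once. (ii) The Haar twirl $X\mapsto\E_V[(V\ot\bar V)X(V\ot\bar V)^\dag]$ on a pair of qudits projects onto the two-dimensional commutant of $\{V\ot\bar V\}$, which is spanned by the identity $I$ and by $\proj{\Phi}$, where $\ket{\Phi}$ is the unnormalized maximally entangled state between the two replicas on those qudits; one has $\proj{\Phi}=q^{-1}\sum_a M_a\ot\overline{M_a}$, and $\ket{\Phi}$ is a fixed point of $G\mapsto G\ot\bar G$ for every unitary $G$.

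With these reductions I would prove, by induction over the gates of the Haar circuit, the invariant that after any prefix $\E(\sigma\ot\sigma^T)$ is a linear combination of operators $\big(\bigotimes_{i\in S}(\proj{\Phi})_i\big)\ot\big(\bigotimes_{i\in T\setminus S}I_i\big)\ot\big(\bigotimes_{i\notin T}(\proj{0}\ot\proj{0})_i\big)$, where $T$ is the set of already-entangled qudits and $S$ ranges over subsets of $T$: this survives each step because a Haar two-qudit twirl returns an element of $\mathrm{span}\{I,\proj{\Phi}\}$ on its two sites regardless of the input, with all other qudits acting as spectators. Since every qudit is eventually entangled, the final operator is a combination of tensor products $\bigotimes_{i\in S}(\proj{\Phi})_i\ot\bigotimes_{i\notin S}I_i$. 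Pairing $M_\bx\ot\overline{M_\by}$ against such a term and using $\tr(M_c)=q\,\d_{c,0}$ together with $\bra{\Phi}(M_c\ot\overline{M_d})\ket{\Phi}=\tr(M_d^\dag M_c)=q\,\d_{c,d}$, one finds the contribution $\prod_{i\in S}q\,\d_{\bx_i,\by_i}\cdot\prod_{i\notin S}q^2\,\d_{\bx_i,0}\,\d_{\by_i,0}$, which is $0$ unless $\bx_i=\by_i$ at every site. This is the desired vanishing; for $\bx=\by$ the surviving expectation is $g_\bx$, and summing over $\bx$ gives $\E_V\E_\but f_V(\but)^2=\sum_\bx\abr{c_\bx}^2 g_\bx$.

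The step I expect to require the most care is reduction (i) --- making precise that each parameterized gate peels into a neighbouring Haar entangling gate. This leans on the model's placement constraints (every parameterized gate sits next to an entangling gate acting on the same sites) together with the fact that a $2$-design reproduces the Haar second moment, which is translation invariant, so the peeling can be carried out one gate at a time until only pure two-qudit Haar gates remain. The inductive step is then mostly bookkeeping over which subsets $S$ arise; its validity hinges on $\ket{\Phi}$ being fixed by $G\mapsto G\ot\bar G$ for every unitary $G$ (so the peeled parameterized gates are harmless) and on the spectator structure of the twirl, and I would also want to check the final pairing identity that converts each surviving term into the site-wise condition $\bx_i=\by_i$.
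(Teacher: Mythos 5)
Your proof is correct and follows essentially the same route as the paper's: absorb the parameterized gates into the neighbouring $2$-design entangling gates, observe that the second-moment twirl leaves each site in $\mathrm{span}\{I,\proj{\Phi}\}$ (the paper's $\ket{I^2_q},\ket{S^2_q}$ in the vectorized four-replica picture), and kill the cross terms via the pairings $\tr(M_c)\tr(\overline{M_d})\propto\d_{c,0}\d_{d,0}$ and $\tr(M_cM_d^\dag)=q\,\d_{c,d}$, which are exactly the paper's boundary identities. The only cosmetic difference is that you phrase the twirl via the commutant projection and an explicit gate-by-gate induction, while the paper writes out the Weingarten expansion directly.
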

That is, the terms of $H$ contribute independently to the variance of $f_V(\but)$. A similar fact was previously shown for $\partial_i f(\but)$ in \cite{Uvarov_2021}. We defer the proof to \Cref{app:markov}. Together, these two lemmas imply that we may study the barren plateau phenomenon by studying the simpler quantities $g_\bx$.

\section{Monte Carlo algorithm}
In this section we give an efficient randomized algorithm for estimating $\E_V \E_{\but} f_V(\but)^2$, applicable for any architecture compatible with the setup described previously. We only describe the algorithm in this section, deferring the derivation to \Cref{app:markov}. We note, however, that the derivation of the algorithm and subsequent analytic bounds draw heavily on the tools and analysis introduced in \cite{dalzell2020random} for mapping random quantum circuits to Markov chains, which in turn drew from a line of work studying random quantum circuits via classical statistical mechanical models (initiated by \cite{Hayden_2016} and \cite{Nahum_2018}). Indeed, the random walk $\cM_{\wt{U}}$ that we define presently was first derived and studied in that work, and we use some of their notation. But while they were concerned with studying the anti-concentration of random circuits, in this work the same random walks are utilized differently to connect to barren plateaus. 

For any variational circuit $U$, we now define the random walk $\cM_{\wt{U}}$ on $n$-bit strings $\{\mtt{I},\mtt{S}\}^n$ associated with $\wt{U}$ as follows, which corresponds to the ``biased random walk'' studied in \cite{dalzell2020random}. Each site is initialized independently with label $\mtt{S}$ with probability $1/(q+1)$, and with label $\mtt{I}$ otherwise. Now, each entangling gate in the circuit $\wt{U}$ ``acts'' on a pair of sites according to the following rules. If the two sites are in the configuration $(\mtt{I}, \mtt{I})$ or $(\mtt{S}, \mtt{S})$, then the gate leaves the configuration unchanged; but if the starting configuration is $(\mtt{I}, \mtt{S})$ or $(\mtt{S}, \mtt{I})$, the gate sends the sites to the state $(\mtt{S}, \mtt{S})$ with probability $1/(q^2+1)$ and to $(\mtt{I}, \mtt{I})$ otherwise. One consequence of this definition is that $\mtt{S}^n$ and $\mtt{I}^n$ are fixed points of the random walk. Let $\Gamma = (\vG^0, \dots, \vG^m)$ denote the random trajectory associated with this process, where $\vG^0$ is the initial configuration and $\vG^t$ is the configuration after action of the $t\textsuperscript{th}$ gate. We now relate $g_\bx$ to the random walk $\cM_{\wt{U}}$. For clarity we first introduce the notation $\supp(\vG^i)$ to denote the set of sites on which $\vG^i$ carries label $\mtt{S}$.  We then have our main result, which is proven in \Cref{app:markov}.
\begin{theorem}\label{thm:biased}
\be
g_\bx = \Pr[\supp(\bx) \subseteq \supp(\vG^m)].
\ee
\end{theorem}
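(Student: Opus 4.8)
The plan is to evaluate $g_\bx$ as a second-moment tensor network on two copies of the Hilbert space and average it gate by gate, matching the result to a transfer-matrix network for the Markov chain $\cM_{\wt{U}}$. First I would vectorize: with $\rho_0 \defeq \proj{0}^{\ot n}$ and $\mathcal U(\cdot) \defeq U(\but)(\cdot)U(\but)^\dagger$, one has $g_\bx = \E_V \E_\but \tr\big[(M_\bx \ot M_\bx)\,(\mathcal U \ot \mathcal U)(\rho_0 \ot \rho_0)\big]$ (in a non-Hermitian operator basis one inserts $M_\bx^\dagger$ in one of the two copies, this being the natural conjugate from the second factor of $f_V$; it is immaterial to the argument). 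Since the entangling gates and the parameters are all independent, $\E[\mathcal U \ot \mathcal U]$ factorizes, in circuit order, into a product of per-gate averaged channels, and everything reduces to understanding those.

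Next I would show that the parameterized gates are invisible to $g_\bx$, so that the $\E_\but$ average is vacuous and only $\wt{U}$ enters. The elementary fact is that, for any unitary $G$, the channel $\sigma \mapsto (G \ot G)\sigma(G \ot G)^\dagger$ commutes with taking the Hilbert--Schmidt overlap of $\sigma$ against the identity and against the operator swapping the two copies on any site pair containing the support of $G$, because that swap commutes with $G \ot G$. By the model's placement rules every parameterized gate can be grouped with a neighbouring entangling gate acting on a site pair containing its support; combined with the previous fact (and with the model assumption that every qudit is touched by at least one entangling gate, which also covers the input and output boundaries) this lets me absorb each parameterized-gate channel into its neighbouring entangling-gate channel, leaving $\E[\mathcal U \ot \mathcal U]$ equal to a composition of the second-moment channels of the entangling gates of $\wt{U}$ alone.

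Then I would run the Weingarten/Schur--Weyl calculus on those channels. A $2$-design reproduces the relevant Haar average, so the second-moment channel $\E_G[(G \ot G)(\cdot)(G \ot G)^\dagger]$ of an entangling gate acting on a pair of sites maps everything into the two-dimensional span of the identity $I$ and the copy-swap $\mathrm{SWAP}$ on those two sites; restricted to the per-site two-dimensional span of the identity and the copy-swap it acts as an explicit $4 \times 4$ matrix, read off from the two functionals $\tr(\cdot)$ and $\tr(\cdot\,\mathrm{SWAP})$. After a diagonal per-site change of gauge---weight $q$ for the label $\mtt{I}$ and weight $1$ for the label $\mtt{S}$---this matrix becomes column-stochastic and recovers exactly the transitions of $\cM_{\wt{U}}$: $(\mtt{I},\mtt{I})$ and $(\mtt{S},\mtt{S})$ are fixed, while $(\mtt{I},\mtt{S})$ and $(\mtt{S},\mtt{I})$ go to $(\mtt{S},\mtt{S})$ with probability $1/(q^2+1)$ and to $(\mtt{I},\mtt{I})$ otherwise. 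In the same gauge the input $\rho_0 \ot \rho_0$---which one may replace by its projection onto the per-site subspaces, since the first entangling gate on each site sees only the overlaps with the identity and the copy-swap---becomes the product distribution in which each site carries label $\mtt{S}$ with probability $1/(q+1)$, while contracting the endpoint against $M_{\bx_i} \ot M_{\bx_i}$ at site $i$ produces a weight that annihilates the label $\mtt{I}$ when $\bx_i \neq 0$ and is uniform otherwise. Assembling the pieces, all the stray powers of $q$ cancel and $g_\bx = \Pr[\supp(\bx) \subseteq \supp(\vG^m)]$.

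The main obstacle is the bookkeeping in this last step: pinning down a single per-site gauge that simultaneously makes the entangling-gate transfer matrix honestly stochastic with the advertised probabilities $1/(q^2+1)$, turns the input boundary into the stated $1/(q+1)$ product distribution, and combines with the output boundary so that the trailing factors of $q^{n}$ cancel, leaving a genuine probability rather than a rescaled one. This is precisely the random-circuit-to-Markov-chain correspondence of \cite{dalzell2020random}, which I reuse; the genuinely new elements are the second step (the parameterized gates drop out of $g_\bx$) and the identification of the input and output boundary conditions appropriate to the barren-plateau quantity $g_\bx$ rather than to anti-concentration.
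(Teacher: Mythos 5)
Your proposal is correct and follows essentially the same route as the paper: absorb the parameterized gates into adjacent $2$-design entangling gates, compute the second-moment (replica) transfer matrix in the per-site $\{I,\mathrm{SWAP}\}$ basis, and rescale by the per-site gauge $(q,1)$ so that the resulting partition function, with the stated input distribution and output boundary contraction against $M_{\bx}\ot M_{\bx}^\dg$, becomes exactly the biased random walk $\cM_{\wt{U}}$. The paper performs the same bookkeeping explicitly via the trajectory weights $\weight_\bx(\gamma)$ and the identity $\weight_\bx(\gamma)=P_b(\gamma)\mathbbm{1}_{\supp(\bx)\subseteq\supp(\vec{\gamma}^m)}$, but there is no substantive difference in approach.
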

Combining this theorem with \Cref{lem:expansion}, we observe that $$\E_V \E_\but f_V(\but)^2 = \br{\sum_\bx |c_\bx|^2}\Pr[\supp(\mathbf{X}) \subseteq \supp(\vG^m)],$$ where $\mathbf{X}$ is a random variable satisfying $\Pr[\mathbf{X}=\bx] \propto |c_\bx|^2$. From this observation and a standard Chernoff bound, we immediately have the following randomized algorithm for estimating $\E_V \E_{\but} f_V(\but)^2$.
\begin{algorithm}[H]
  \caption{Algorithm for $\E_V \E_{\but} f_V(\but)^2$}\label{alg}
  \label{algorithm}
  \hspace*{\algorithmicindent} \textbf{Input:} variational circuit $U(\but)$; $H = \sum_\bx c_\bx M_\bx$; additive error tolerance $\ep$; failure probability tolerance $\delta$ \\
 \hspace*{\algorithmicindent} \textbf{Output:} estimate of $\E_V \E_{\but} f_V(\but)^2$ \\
 \hspace*{\algorithmicindent} \textbf{Runtime:} $O\br{m (\sum_\bx |c_\bx|^2)^2 \log(\delta^{-1}) \ep^{-2}}$
   \begin{algorithmic}[1]
   \State $N \gets \frac{1}{2}\log\br{\frac{2}{\delta}}\br{\frac{\sum_\bx |c_\bx|^2}{\ep}}^2$
   \For{$i \in [N]$}
   		\State Sample a realization $\vg^m$ of $\vG^m$ by simulating $\cM_{\wt{U}}$
   		\State Sample $\mathbf{X} \in [q]^n$ s.t. $\Pr[\mathbf{X}=\bx] \propto |c_\bx|^2$
   		\State $a_i \gets \mathbbm{1}_{\supp(\mathbf{X}) \subseteq \supp(\vg^m)}$
   \EndFor
   \State \textbf{return} $ (\sum_\bx |c_\bx|^2)N^{-1} \sum_{i=1}^N a_i$.
   \end{algorithmic}
\end{algorithm}
Here, $\mathbbm{1}_E$ is the random variable which is one if event $E$ occurs and zero otherwise. This algorithm provides an efficient method for quantifying the flatness of the objective landscape for any architecture compatible with our setup. It also provides an intuitive understanding of how the flatness of the landscape depends on the variational circuit depth, the locality of the objective Hamiltonian $H$, and the qudit dimension. Since the random walk on configurations is biased in favor of $\mtt{I}$ labels, we expect that over time, the population of $\mtt{S}$ labels will exponentially decay. Hence, we expect $\Pr[\supp(\bx) \subseteq \supp(\vG^m)]$---and therefore $\E_V \E_\but \| \nabla f_V(\but) \|^2$---to generally be exponentially small in the circuit depth and the size of the support of $\bx$. Meanwhile, the qudit dimension $q$ is directly related to the degree to which the walk is biased in favor of $\mtt{I}$ labels.


In the remainder of this paper we make these intuitions more quantitative, deriving bounds on the flatness of the landscape which follow largely from \Cref{thm:biased} as well as techniques previously developed in \cite{dalzell2020random} for studying anti-concentration. In particular, the arguments used to derive the bounds below follow a similar approach to arguments used in \cite{dalzell2020random} to analytically upper and lower bound a measure of anti-concentration for general and 1D architectures.  We report bounds on $g_\bx$, which are immediately related to $\E_V \E_\but f_V(\but)^2$ and $\E_V \E_\but \| \nabla f_V(\but) \|^2$ via \Cref{lem:Zeroth_First_Inequalities,lem:expansion}.

\section{General lower bound}
Consider the associated biased random walk $\cM_{\wt{U}}$ introduced in the previous section. From \Cref{thm:biased}, we see that bounding $g_\bx$ is equivalent to bounding $\Pr[\supp(\bx) \subseteq \supp(\vG^m)]$. In turn, this is lower bounded by (1) the probability that $\supp(\bx) \subseteq \supp(\Gamma^0)$ \emph{and} no spin in $\supp(\bx)$ ever flips to configuration $\mtt{I}$, as well as (2) the probability that $\vG^0 = \mtt{S}^n$, yielding
\be\label{eq:g_general_lower}
g_\bx \geq \max\cbr{\br{\frac{1}{q+1}}^{|\bx|}\br{\frac{1}{q^2+1}}^{\text{gates}(\bx)}, \br{\frac{1}{q+1}}^n},
\ee
where $\text{gates}(\bx)$ denotes the number of entangling gates acting between $\supp(\bx)$ and its complement.

To obtain a general lower bound in terms of the entangling gate circuit depth $d$, simply note that $\text{gates}(\bx) \leq d |\bx|$, due to the fact that the maximal possible number of gates which can act between sites in $\supp(\bx)$ and its complement in a single layer is $|\bx|$. Then we may concisely write a lower bound as follows.
\begin{corollary}\label{cor:general_lower}
\be
g_\bx \geq \max\br{q^{-\Theta(1) \cdot d |\bx|}, (q+1)^{-n}}.
\ee
\end{corollary}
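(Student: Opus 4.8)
The plan is to derive \Cref{cor:general_lower} directly from the bound \eqref{eq:g_general_lower}, which is the only nontrivial input needed; the corollary is essentially a packaging of that inequality into a clean asymptotic form. So the argument splits into handling the two terms inside the $\max$ in \eqref{eq:g_general_lower} separately and matching each to the corresponding term in the corollary.

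First I would address the first term, $\br{\frac{1}{q+1}}^{|\bx|}\br{\frac{1}{q^2+1}}^{\text{gates}(\bx)}$. The key combinatorial observation, already flagged in the text, is that in any single layer of parallel gates, at most $|\bx|$ entangling gates can straddle the cut between $\supp(\bx)$ and its complement: each such gate must touch at least one of the $|\bx|$ sites in $\supp(\bx)$, and within one layer these touched sites are distinct since gates in a layer act on disjoint pairs. Hence $\text{gates}(\bx) \leq d\,|\bx|$. Substituting, the first term is at least $\br{\frac{1}{q+1}}^{|\bx|}\br{\frac{1}{q^2+1}}^{d|\bx|}$. Now I would bound both $(q+1)$ and $(q^2+1)$ crudely by a power of $q$ — e.g. $q+1 \leq q^2$ and $q^2+1 \leq 2q^2 \leq q^3$ for $q\geq 2$ — so that the whole expression is at least $q^{-(3d+2)|\bx|} \geq q^{-\Theta(1)\cdot d|\bx|}$, using $|\bx|\geq 1$ (the case $\bx=\bze$ having been excluded since $H$ is traceless) and $d\geq 1$ to absorb the $+2$ into the $\Theta(1)\cdot d$. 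The constants here are immaterial; any fixed bound suffices for the $\Theta(1)$ notation.

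Second, the term $\br{\frac{1}{q+1}}^n$ transfers verbatim to $(q+1)^{-n}$, so nothing is needed there beyond quoting \eqref{eq:g_general_lower}. Taking the maximum of the two processed terms gives exactly $\max\br{q^{-\Theta(1)\cdot d|\bx|},\,(q+1)^{-n}}$, completing the proof. I would also note in passing that \Cref{thm:biased} is what justifies \eqref{eq:g_general_lower} in the first place — the two lower-bounding events (all of $\supp(\bx)$ starting as $\mtt{S}$ and never flipping, versus the whole configuration starting at $\mtt{S}^n$ which is then a fixed point) are disjoint descriptions whose probabilities are computed directly from the independent initialization ($1/(q+1)$ per site) and the per-gate flip probability ($1/(q^2+1)$), but since \eqref{eq:g_general_lower} is given I would simply cite it.

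There is essentially no hard step here; the only thing requiring a moment's care is the counting argument for $\text{gates}(\bx)\leq d|\bx|$, and even that is a one-line pigeonhole once one fixes the right viewpoint (count, per layer, the cut-crossing gates by the endpoint lying in $\supp(\bx)$). The main ``obstacle'', such as it is, is purely cosmetic: choosing constants and inequalities ($q+1\leq q^2$, etc.) so that the exponent collapses neatly into $\Theta(1)\cdot d|\bx|$ without the additive $|\bx|$ term from the $(q+1)^{-|\bx|}$ factor causing clutter — handled by noting $|\bx|\leq d|\bx|$ since $d\geq 1$.
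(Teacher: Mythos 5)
Your proof is correct and follows essentially the same route as the paper: lower-bounding $g_\bx$ via the two events from \Cref{thm:biased} (all of $\supp(\bx)$ initialized to $\mtt{S}$ and never flipping, versus $\vG^0=\mtt{S}^n$), applying the per-layer counting bound $\text{gates}(\bx)\leq d|\bx|$, and absorbing the constants into the $\Theta(1)$. No issues.
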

Informally, the landscape is no flatter than an exponential in the product of the depth and locality, and its flatness is always lower bounded by $(q+1)^{-n}$.

\section{General upper bound} Here we use the random walk picture to derive a general upper bound, which we later improve in the 1D setting. We relegate the details of the calculation to \Cref{app:cor_proofs}, while describing the strategy and result here. Recalling that $g_\bx = \Pr[\supp(\bx) \subseteq \supp(\vG^m)]$, we upper bound this probability over random walks by separately considering non-zero contributions from two types of trajectories: (1) trajectories that reach the fixed point of $\mtt{S}^n$, and (2) trajectories that do not. The probability of a trajectory being of type (1) is exponentially small in $n$, intuitively due to the fact that they typically only occur if a large majority of the sites of the starting configuration $\vG^0$ are in state $\mtt{S}$, which is exponentially unlikely. The probability that a trajectory $\gamma$ is of type (2) and satisfies $\supp(\bx) \subseteq \supp(\vec{\gamma}^m)$ is exponentially small in $d$, intuitively due to the fact that, since the random walk is biased in favor of $\mtt{I}$, the population of $\mtt{S}$ labels is exponentially decaying with $d$. Precisely, we find 
\be\label{eq:g_general_upper}
g_\bx \leq \br{\frac{2q}{q+1}}^n \br{\frac{2q}{q^2+1}}^{\lfloor d/r \rfloor} q^{-|\bx|} + (q^n+1)^{-1}.
\ee

This implies the following corollaries, relevant in the large-$|\bx|$ and large-$d$ settings, respectively.
\begin{corollary}\label{cor:general_upper_global}
For $\bx$ satisfying $|\bx| \geq n/2$,
\be
g_\bx \leq q^{-\Theta(1) \cdot n}.
\ee
\end{corollary}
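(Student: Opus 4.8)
The plan is to obtain \Cref{cor:general_upper_global} directly from the general upper bound \eqref{eq:g_general_upper} by plugging in $|\bx| \ge n/2$ and crudely bounding each of the three factors. First, since $q^2 + 1 \ge 2q$ by AM--GM (equivalently $(q-1)^2 \ge 0$), we have $2q/(q^2+1) \le 1$, so the depth-dependent factor satisfies $\br{\tfrac{2q}{q^2+1}}^{\lfloor d/r \rfloor} \le 1$ for every choice of $d$ and $r$. This lets us discard it entirely, which is appropriate since we want a statement holding for all architectures compatible with the model.

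Next, using $|\bx| \ge n/2$ we replace $q^{-|\bx|}$ by $q^{-n/2}$ and fold it into the first factor: $\br{\tfrac{2q}{q+1}}^n q^{-n/2} = \br{\tfrac{2\sqrt q}{q+1}}^n$. The key elementary observation is that $2\sqrt q < q+1$ for every integer $q \ge 2$, because $q + 1 - 2\sqrt q = (\sqrt q - 1)^2 > 0$; hence $\lambda_q \defeq \tfrac{2\sqrt q}{q+1} \in (0,1)$ and $\br{\tfrac{2\sqrt q}{q+1}}^n = q^{-(\log_q \lambda_q^{-1})\, n}$. One then checks that the exponent constant $\log_q \lambda_q^{-1}$ is $\Theta(1)$ uniformly in $q$: as $q \to \infty$ it converges to $1/2$ (since $\lambda_q \sim 2/\sqrt q$), and at $q=2$ it equals $\log_2\!\br{\tfrac{3}{2\sqrt 2}} \approx 0.085 > 0$, so it lies in $[c_0, 1/2]$ for some universal $c_0 > 0$. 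Finally the additive tail $(q^n+1)^{-1} \le q^{-n}$ is also of the form $q^{-\Theta(1)\cdot n}$ (with constant $1$), and a sum of two quantities each bounded by $q^{-\Theta(1)\cdot n}$ is again $q^{-\Theta(1)\cdot n}$ with the smaller exponent constant. Combining these steps yields $g_\bx \le 2\,\lambda_q^{\,n} \le q^{-\Theta(1)\cdot n}$, as claimed.

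There is essentially no real obstacle once \eqref{eq:g_general_upper} is in hand (its proof being deferred to \Cref{app:cor_proofs}); the only point needing a moment's care is verifying that the implicit constant in the exponent does not degrade toward $0$ in any regime of $q$, which the explicit evaluation of $\log_q \lambda_q^{-1}$ above settles. An alternative, slightly cleaner phrasing is to record the single inequality $g_\bx \le 2\lambda_q^{\,n}$ and then remark that $\lambda_q^{\,n} = q^{-\Theta(1)\cdot n}$ by the convexity estimate $(\sqrt q - 1)^2 > 0$.
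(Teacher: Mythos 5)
Your proposal is correct and follows the paper's intended route: the paper states \Cref{cor:general_upper_global} as an immediate consequence of the general upper bound \eqref{eq:g_general_upper}, and your derivation (discarding the depth factor since $2q \le q^2+1$, absorbing $q^{-n/2}$ into $\br{\tfrac{2q}{q+1}}^n$ to get a base $\tfrac{2\sqrt q}{q+1}<1$, and bounding the additive tail by $q^{-n}$) is exactly the implicit calculation, with the useful added care of checking that the exponent constant stays bounded away from zero uniformly in $q$.
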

This corollary states that global observables always suffer from severe barren plateaus, regardless of the ansatz architecture. This effectively simplifies and generalizes one of the main results of \cite{Cerezo_2021_cost}, which analyzed particular classes of architectures and observables.

\begin{corollary}\label{cor:general_upper}
For $\bx \neq \bze$ and $d > d^* = \Theta(1) \cdot rn$,
\be
g_\bx \leq q^{-n} + q^{-\Theta(1) \cdot (d-d^*)/r  - |\bx |}.
\ee
\end{corollary}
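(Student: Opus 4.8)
The plan is to obtain \Cref{cor:general_upper} as a purely algebraic consequence of the general upper bound \eqref{eq:g_general_upper} (which may be assumed), trading the exponentially-growing factor $\big(\tfrac{2q}{q+1}\big)^n$ against part of the exponentially-decaying factor $\big(\tfrac{2q}{q^2+1}\big)^{\lfloor d/r\rfloor}$. First I would dispatch the additive term of \eqref{eq:g_general_upper} using $(q^n+1)^{-1}\le q^{-n}$, which supplies the $q^{-n}$ summand in the claimed bound; the remaining work concerns only the first term.

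Write $\alpha\defeq \tfrac{2q}{q+1}$ and $\beta\defeq \tfrac{2q}{q^2+1}$. Since the local dimension satisfies $q\ge 2$, one checks $1<\alpha<2$ and $0<\beta<1$ (indeed $q^2+1-2q=(q-1)^2>0$), and one may write $\beta=q^{-c}$ for the positive, $q$-dependent constant $c\defeq\log_q\!\big(\tfrac{q^2+1}{2q}\big)$. Using $\lfloor d/r\rfloor\ge d/r-1$, the first term of \eqref{eq:g_general_upper} is at most $\beta^{-1}\alpha^{n}\beta^{d/r}q^{-|\bx|}$. Now set $d^*\defeq \big(\log_{1/\beta}\alpha\big)\,rn+r$, which is $\Theta(1)\cdot rn$; by construction $\beta^{-1}\alpha^{n}\beta^{d^*/r}\le 1$. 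Splitting $\beta^{d/r}=\beta^{d^*/r}\beta^{(d-d^*)/r}$ and discarding the prefactor $\beta^{-1}\alpha^{n}\beta^{d^*/r}\le 1$, we obtain for $d>d^*$ that
\be
\Big(\tfrac{2q}{q+1}\Big)^n\Big(\tfrac{2q}{q^2+1}\Big)^{\lfloor d/r\rfloor}q^{-|\bx|}\ \le\ \beta^{(d-d^*)/r}q^{-|\bx|}\ =\ q^{-c(d-d^*)/r-|\bx|},
\ee
and adding back the $q^{-n}$ from the first step yields exactly the stated bound, the suppressed $\Theta(1)$ in the exponent being $c$.

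I do not anticipate a genuine obstacle: the entire combinatorial and probabilistic content already lives in \eqref{eq:g_general_upper}, and what remains is bookkeeping. The only points requiring a little care are (i) the floor $\lfloor d/r\rfloor$, which costs a harmless factor $\beta^{-1}$ that I absorb into the additive ``$+r$'' in the definition of $d^*$, and (ii) being explicit that the suppressed constants $\log_{1/\beta}\alpha$ and $c$ depend only on $q$, consistent with the use of ``$\Theta(1)$'' elsewhere (e.g.\ in \Cref{cor:general_lower}). One could alternatively redo the random-walk analysis behind \eqref{eq:g_general_upper} directly, but routing through \eqref{eq:g_general_upper} is the cleanest path.
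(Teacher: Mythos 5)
Your proposal is correct and follows the paper's intended route: the paper derives the explicit bound \eqref{eq:g_general_upper} and then asserts the corollary follows, leaving implicit exactly the algebra you carry out (absorbing $\left(\tfrac{2q}{q+1}\right)^n$ into $\left(\tfrac{2q}{q^2+1}\right)^{d^*/r}$ via the choice $d^*=\Theta(1)\cdot rn$, handling the floor with a harmless constant, and bounding $(q^n+1)^{-1}\le q^{-n}$). Your bookkeeping, including the $q$-dependence of the $\Theta(1)$ constants, checks out.
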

Informally, the landscape approaches its asymptotic flatness $\sim q^{-n}$ exponentially fast in the depth once the depth exceeds $O(r\cdot n)$. Intuitively, we expect this bound to be weak, as this exponential flattening should happen well before the depth becomes $O(n)$. Indeed, it can be greatly improved for the 1D architecture studied subsequently.



\section{Improved bounds for 1D architecture}
While the bounds derived in the previous sections are quite general, we now specialize to 1D. In particular, we consider variational circuit architectures as follows. For convenience, we assume that there are an even number of qudits---numbered $\{0, 1, \dots, n-1\}$---and that boundary conditions are periodic; these two constraints are not crucial. Also assume that the entangling gates are structured as follows. At odd timesteps, an entangling gate is applied between each pair of sites $(2j, 2j+1)$ for $j \in [0, \dots, n/2 - 1]$. At even timesteps, an entangling gate is applied between each pair of sites $(2j, 2j-1 \mod n)$ for $j \in [0, \dots, n/2 - 1]$. We study $g_\bx$ as before, and additionally assume that $M_\bx$ is spatially $k$-local. 

The simplified structure of the 1D architecture permits stronger bounds to be deduced. These bounds are derived in \Cref{app:cor_proofs} and are summarized as follows.
\begin{corollary}\label{cor:1d_lower}
For 1D architecture and spatially $k$-local $M_\bx$,
\be
g_\bx \geq \max\br{q^{-\Theta(1) \cdot (d + k)}, (q+1)^{-n}}.
\ee
\end{corollary}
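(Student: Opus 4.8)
The plan is to use \Cref{thm:biased} to rewrite $g_\bx = \Pr[\supp(\bx)\subseteq\supp(\vG^m)]$ and then lower bound this probability by exhibiting one explicit favorable event for each term of the claimed maximum. The term $(q+1)^{-n}$ is the easy one: it is simply $\Pr[\vG^0 = \mtt{S}^n]$, and since $\mtt{S}^n$ is a fixed point of $\cM_{\wt{U}}$ we get $\supp(\bx)\subseteq\supp(\vG^m)$ on this event (this is already recorded in \eqref{eq:g_general_lower}). All the content is in the term $q^{-\Theta(1)\cdot(d+k)}$, and here I would exploit the rigid causal structure of the 1D brick-wall.

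Since $M_\bx$ is spatially $k$-local, $\supp(\bx)$ lies inside a cyclic arc $W$ of $k$ consecutive qudits; let $V \supseteq W$ be the arc of $k+2d$ consecutive qudits obtained by padding $W$ by $d$ sites on each side. The key lemma, proved by a one-line induction on the layer index, is: \emph{if the walk configuration carries label $\mtt{S}$ on every site of a cyclic arc $[a,b]$, then after one brick-wall layer it carries $\mtt{S}$ on every site of $[a+1,b-1]$.} This holds because each brick-wall layer is a perfect matching into cyclically adjacent pairs, so any site strictly interior to $[a,b]$ is matched with a partner that is also in $[a,b]$, hence is $\mtt{S}$, and an $(\mtt{S},\mtt{S})$ gate acts trivially by definition of $\cM_{\wt{U}}$; only the two endpoints can be matched outward and get flipped. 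Iterating this lemma through all $d$ layers, I would condition on the event $E$ that $\vG^0$ carries label $\mtt{S}$ on all of $V$, which has probability exactly $(q+1)^{-(k+2d)}$ by the independent initialization; under $E$ the configuration $\vG^m$ carries $\mtt{S}$ on all of $W \supseteq \supp(\bx)$, so $g_\bx \ge \Pr[E] = (q+1)^{-(k+2d)}$. Using $k+2d = \Theta(d+k)$ and $\ln(q+1) = \Theta(\ln q)$ for $q\ge 2$, this is $q^{-\Theta(1)\cdot(d+k)}$.

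The remaining edge case is when $k+2d \ge n$, where the arc $V$ would not be a proper sub-arc of the cycle; but then the claimed maximum is already dominated, since $(q+1)^{-n} \ge (q+1)^{-(k+2d)} \ge q^{-\Theta(1)\cdot(d+k)}$, and the $(q+1)^{-n}$ bound suffices. Taking the maximum of the two established lower bounds gives the corollary. The main thing to get right --- and the place where the 1D (more precisely, nearest-neighbor brick-wall) structure is essential --- is the window-shrinking lemma: one needs each layer to be a union of nearest-neighbor perfect matchings so that the all-$\mtt{S}$ arc erodes by at most one site per side per layer; for a general or higher-dimensional architecture this fails and one cannot protect $\supp(\bx)$ with an arc of size only $O(d+k)$.
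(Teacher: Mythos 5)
Your proof is correct, and it reaches the same $q^{-\Theta(1)\cdot(k+2d)}$ scaling as the paper's argument, but via a genuinely different favorable event. The paper's proof is a direct refinement of the general lower bound \eqref{eq:g_general_lower}: it conditions on the $k$ sites $\cS \supseteq \supp(\bx)$ all starting in $\mathtt{S}$ (cost $(q+1)^{-k}$) and then pays a factor $(q^2+1)^{-1}$ for each of the at most $d+1$ entangling gates that straddle the boundary of $\cS$ in the brick-wall circuit, forcing each such gate to flip the \emph{outside} spin to $\mathtt{S}$ rather than flip an inside spin to $\mathtt{I}$; this yields $(q+1)^{-k}(q^2+1)^{-(d+1)}$. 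You instead condition only on the initial configuration --- all $\mathtt{S}$ on the padded arc $V$ of $k+2d$ sites, cost $(q+1)^{-(k+2d)}$ --- after which your window-shrinking lemma makes the survival of $\mathtt{S}$ on the inner arc $W$ \emph{deterministic}, since every layer is a perfect matching of adjacent pairs and $(\mathtt{S},\mathtt{S})$ pairs are fixed by the walk. The two events are different, the two bounds are numerically almost identical ($\approx q^{-k-2d}$), and your edge-case handling for $k+2d \ge n$ is fine. What the paper's route buys is that it is literally the general-architecture bound with an improved count of boundary-crossing gates, so it makes transparent where the 1D improvement over $\mathrm{gates}(\bx)\le d|\bx|$ comes from; what your route buys is that no probabilistic control of individual gate outcomes is needed after time zero, and it generalizes naturally to other geometries by replacing the arc $V$ with the depth-$d$ backward lightcone of $\supp(\bx)$, giving $g_\bx \ge (q+1)^{-|B_d(\supp(\bx))|}$ (which recovers the 1D bound because that lightcone has size $k+2d$ there, though it can be weaker than \Cref{cor:general_lower} on expanding architectures).
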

\begin{corollary}\label{cor:1d_upper}
For 1D architecture, $\bx \neq \bze$, spatially $k$-local $M_\bx$, and $d > d^* = \Theta(1) \cdot \log k$,
\be
g_\bx \leq q^{-(\Theta(1) \cdot (d-d^*) + |\bx |)} + q^{-n}.
\ee
\end{corollary}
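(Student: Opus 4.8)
The plan is to combine \Cref{thm:biased} with a localized refinement of the argument behind the general upper bound \eqref{eq:g_general_upper}, using the brick-wall structure of the 1D architecture and the spatial $k$-locality of $M_\bx$ to trade the $n$- and $r$-dependent factors there for a $\log k$ depth offset and a constant-rate decay. By \Cref{thm:biased}, $g_\bx = \Pr[\supp(\bx)\subseteq\supp(\vG^m)]$ for the biased walk $\cM_{\wt U}$. I would split the ensemble of trajectories $\vG=(\vG^0,\dots,\vG^m)$ according to whether it ever visits the absorbing fixed point $\mtt{S}^n$, which gives $g_\bx \le \Pr[\vG^m=\mtt{S}^n] + P^\star$, where $P^\star \defeq \Pr[\supp(\bx)\subseteq\supp(\vG^m)\text{ and }\vG^t\neq\mtt{S}^n\ \forall t]$. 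The first term will contribute the $q^{-n}$ summand and the second the $q^{-(\Theta(1)(d-d^*)+|\bx|)}$ summand.

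For the first term I would use the martingale $Z_t\defeq (q^2)^{|\supp(\vG^t)|}$: the only nontrivial gate transition of $\cM_{\wt U}$ sends a pair in configuration $(\mtt{I},\mtt{S})$ or $(\mtt{S},\mtt{I})$ to $(\mtt{S},\mtt{S})$ with probability $1/(q^2+1)$ and to $(\mtt{I},\mtt{I})$ otherwise, and $\tfrac{1}{q^2+1}q^4+\tfrac{q^2}{q^2+1}=q^2$, so $Z_t$ is a bounded martingale, while the product initial law gives $\E[Z_0]=(\tfrac{1}{q+1}q^2+\tfrac{q}{q+1})^n=q^n$. Since $Z_m\ge q^{2n}\,\mathbbm{1}[\vG^m=\mtt{S}^n]$ and $\E[Z_m]=\E[Z_0]=q^n$, this yields $\Pr[\vG^m=\mtt{S}^n]\le q^{-n}$.

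The substance is bounding $P^\star$, and this is where the 1D layout and spatial locality enter. Since $M_\bx$ is spatially $k$-local, $\supp(\bx)$ lies in a window $W$ of $k$ consecutive qudits; along a trajectory that never reaches $\mtt{S}^n$, the event $\supp(\bx)\subseteq\supp(\vG^m)$ requires intervals (``bubbles'') of $\mtt{S}$-labels covering $\supp(\bx)$ at time $m$. Adapting the 1D anti-concentration analysis of \cite{dalzell2020random}, I would trace these bubbles backward through the brick-wall: each bubble's endpoints perform a random walk that drifts inward (the biased walk strictly prefers $\mtt{I}$), so an $\mtt{S}$-bubble of width $w$ present in $\vG^0$ persists for $d$ layers with probability at most $q^{\Theta(1)w-\Theta(1)d}$ (the bound read as $1$ when it exceeds $1$), and the event that such a bubble even appears in $\vG^0$ costs $(q+1)^{-w}$. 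Bubbles originating outside $W$ must first reach $W$, which only makes their contribution smaller, and a short geometric estimate absorbs them. Summing $(q+1)^{-w}\,q^{\Theta(1)w}\,q^{-\Theta(1)d}$ over bubble widths $w$ and over the $O(k)$ possible positions within a neighborhood of $W$ — a $\mathrm{poly}(k,d)$ entropy factor that is absorbed into the exponents by shrinking the constants — gives $P^\star\le q^{-|\bx|}\,q^{\Theta(1)\log k}\,q^{-\Theta(1)d}$. Choosing $d^*=\Theta(1)\log k$ so the $q^{\Theta(1)\log k}$ prefactor is killed, this is $\le q^{-(\Theta(1)(d-d^*)+|\bx|)}$ for $d>d^*$; adding the two terms proves the corollary.

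The hard part is making this bubble accounting rigorous, and it has three delicate pieces. First, one must justify that — prior to $\mtt{S}^n$ being reached — only $\mtt{S}$-bubbles in or near the $k$-site window can affect whether $\supp(\bx)\subseteq\supp(\vG^m)$, so that the combinatorial prefactor is $\mathrm{poly}(k)$ rather than exponential in $n$; this is precisely where 1D plus spatial locality improves on \Cref{cor:general_upper}. Second, one must upper bound the survival probability of a width-$w$ bubble over $d$ layers by a clean $q^{\Theta(1)w-\Theta(1)d}$, correctly handling the parity structure of the brick-wall gates and the possibilities that bubbles grow, merge, or split. Third, one must check that the initial cost $(q+1)^{-w}$, the survival bound, and the combinatorial sum combine into a convergent series with the stated exponents, in particular producing the $q^{-|\bx|}$ factor. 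Each ingredient mirrors one in \cite{dalzell2020random}, but the bookkeeping tying the walk to $\supp(\bx)$ and to $|\bx|$ is the new part.
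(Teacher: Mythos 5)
Your high-level skeleton matches the paper's: split off the contribution of trajectories absorbed at $\mtt{S}^n$, then bound the remainder by a 1D domain-wall analysis in the spirit of Dalzell et al. Your martingale argument for the first term ($Z_t = q^{2|\supp(\vG^t)|}$ is a bounded martingale with $\E Z_0 = q^n$, hence $\Pr[\vG^m=\mtt{S}^n]\le q^{-n}$) is correct and is a nice self-contained replacement for the paper's route through \Cref{lem:dalzell,lem:prob_fixed}. However, the main estimate --- the bound on $P^\star$ --- has a genuine gap: the computation you actually display, namely summing $(q+1)^{-w}q^{\Theta(1)w}q^{-\Theta(1)d}$ over bubble widths and over $O(k)$ positions, produces something of the form $\mathrm{poly}(k,d)\,q^{-\Theta(1)d}$ with \emph{no dependence on $|\bx|$ at all}; the factor $q^{-|\bx|}$ in your claimed conclusion does not follow from it. This is not a cosmetic issue: $|\bx|$ can be as large as $k$ while $d$ is only $\Theta(\log k)$, so the $q^{-|\bx|}$ factor cannot be recovered by shrinking the constant in $q^{-\Theta(1)d}$, and it is not implied by the per-bubble initial cost $(q+1)^{-w}$ unless one carefully relates the total bubble width at time $m$ (not time $0$) to $|\bx|$. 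The paper sidesteps this entirely by working with the partition-function weights rather than the bare probability: $\weight_\bx(\g)$ contains an explicit factor $q^{-|\vg^m|}\le q^{-|\bx|}$ (equivalently, one passes to the \emph{unbiased} walk $P_u$ via \Cref{eq:unbiased}, where this boundary factor is exposed), and the remaining domain-wall counting only has to control $\br{\tfrac{2q}{q^2+1}}^{\flips(\g)}$. If you insist on the pure probability picture you must reprove this, e.g.\ from the identity $P_b(\g)=\br{\tfrac{q}{q+1}}^n q^{-|\vg^m|}\br{\tfrac{q}{q^2+1}}^{\flips(\g)}$, which is exactly the reweighting you are trying to avoid.

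Two further points where your sketch would need repair. First, bubbles cannot ``split'' in this dynamics (no new domain walls are ever created), and the clean way to handle merging/annihilation is the paper's injective factorization $\g\mapsto(c_\g,f_\g)$ into conserved and annihilated domain walls, with $\sum_{\text{fixed }\g}\weight_\bze(f_\g)\le g_\bze=1$ disposing of the annihilated part; your per-bubble survival bound does not obviously control configurations in which many initial bubbles merge into one that covers $W$. Second, your claim that a global-fixed-point split plus ``bubbles outside $W$ contribute less'' confines the entropy to $\mathrm{poly}(k)$ is too quick: the correct statement is a backwards-lightcone restriction to $n'\le\min(n,k+2d)$ sites, and then the event that the \emph{lightcone} becomes all-$\mtt{S}$ (which forces $\supp(\bx)\subseteq\supp(\vG^m)$ without the global walk ever reaching $\mtt{S}^n$) lands in your $P^\star$ with probability $\approx q^{-(k+2d)}$. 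This term must be explicitly absorbed into $q^{-(\Theta(1)(d-d^*)+|\bx|)}$ using $|\bx|\le k$, as the paper does in the last lines of \Cref{app:upper1D}; your decomposition does not make it go away. The domain-wall entropy itself is $\binom{n'}{l}$ per $l$ walls, and the role of $d^*=\Theta(\log k)$ is precisely to make the per-wall decay $\br{\tfrac{2q}{q^2+1}}^{d-1}$ beat the factor $n'\le k+2d$, so that the binomial sum converges --- this is consistent with your intuition but needs the per-wall bookkeeping, not a single global $\mathrm{poly}(k,d)$ prefactor.
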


Hence, in this setting we rigorously show that $g_\bx$ decays exponentially in the circuit depth and locality before saturating at $\sim q^{-n}$. These 1D bounds effectively simplify and improve some results previously obtained in \cite{Cerezo_2021_cost}, and extend them to more general regimes by permitting $k$, $d$, and $H$ to be arbitrary.

\section{Discussion}
For a model of HEAs, we have given an efficient Monte Carlo algorithm for estimating the flatness of the landscape for any architecture, and derived bounds on the typical magnitude of the gradient. In general, the algorithm provides an intuitive picture for how the landscape flatness depends on the ansatz architecture and objective observable. Here, we point out a couple of additional implications which follow straightforwardly from our results. For one, since the terms of $H$ independently contribute to $\E_V \E_\but f_V(\but)^2$ (\Cref{lem:expansion}), and we expect that the contribution of a given term to $\E_V \E_\but f_V(\but)^2$ is exponentially suppressed in its locality, we expect that local optimization approaches may have a tendency to under-weight terms of $H$ which act on more sites, since gradients may be dominated by contributions from low-weight terms. Or, more intuitively, we expect that most local minima in the optimization landscape will be mostly attributable to low-weight terms. This means that, even in architectures for which the barren plateau phenomenon is avoided, gradient descent may often be led to approximately the same local optima that they would have been for a truncated version of the Hamiltonian with terms of high locality removed. For an extreme illustration of this point, one may consider a constant-depth ansatz and an $H$ whose terms are all either $1$-local or have weight $n$. The contribution of the $1$-local terms to the objective function $f(\but)$ does not suffer from barren plateaus (from \Cref{cor:general_lower}), but the contribution from $n$-local terms experiences severe barren plateaus (from \Cref{cor:general_upper_global}), which we expect to manifest as ``narrow gorges'' \cite{Cerezo_2021_cost, arrasmith2021equivalence} in the landscape which local optimization algorithms will almost certainly not find. While this is an extreme example, a reflection on \Cref{thm:biased} reveals that the typical contribution of 2-local terms to the gradient can already be significantly smaller than that of 1-local terms, and the contribution of $k$-local terms will typically give a relative contribution exponentially smaller in $k$. This phenomenon may therefore already be relevant for the case of local Hamiltonians, and could be significant in practice for VQAs utilizing an HEA. Intuitively, this suggests the importance of using optimization methods that explore wide swaths of the parameter space, rather than a purely local approach like a naive gradient descent. However, as $k$ becomes larger, we expect it will become exponentially harder to ``find'' the significant contributions to the landscape coming from weight-$k$ terms.

We also point out that, while previous works \cite{McClean_2018, holmes2021connecting} have shown that an ansatz forming a 2-design is sufficient to imply small gradients, a straightforward implication of our work is that the landscape flatness does not generally saturate at the 2-design depth. For example, consider a 2D $\sqrt{n} \times \sqrt{n}$ array of qubits, and an ansatz whose entangling gates form a 2D random circuit with nearest-neighbor gates. This ansatz becomes an approximate 2-design at depth $O(\sqrt{n})$ \cite{harrow2018approximate}, but, assuming the objective observable is local, by \Cref{cor:general_lower} the gradient variance is still lower bounded by $\exp(-O(\sqrt{n}))$ at this depth. As the depth is increased, the gradient variance continues to shrink until it finally saturates at the much smaller value $\exp(-\Theta(n))$ at depth $\Theta(n)$ (by \Cref{cor:general_upper}).

\begin{acknowledgments}
I thank Eric Anschuetz, Alex Dalzell, Aram Harrow, Tongyang Li, and Beatrice Nash for feedback and helpful discussions related to this work.
\end{acknowledgments}

\onecolumngrid
\appendix
\section{Proof of \Cref{lem:Zeroth_First_Inequalities}}\label{app:ProofOfGradientInequalities}
Recalling that $f_V(\but)$ is component-wise periodic with period $2\pi$, we consider the Fourier series decomposition
\be
f_V(\but) = \sum_{\bn} \wt{f}_{V,\bn} e^{i \but \cdot \bn}.
\ee
Note that $\wt{f}_{V,\bn} = \wt{f}^*_{V,-\bn}$ since $f_V$ is real-valued. Since $f_V(\but)$ is smooth and periodic, we may differentiate to obtain
\be
\frac{\pd f_V}{\pd \but_j}(\but) = i \sum_{\bn} \bn_j \wt{f}_{V,\bn} e^{i \but\cdot \bn}.
\ee
We also have from Parseval's theorem that $\E_\but f_V(\but)^2 = \sum_{\bn} |\wt{f}_{V,\bn}|^2$. Another application of Parseval's theorem, and taking an expectation value over $V$, gives
\be\label{eq:partialSq}
\E_V \E_{\but} \br{\frac{\pd f_V}{\pd \but_j}}^2 = \sum_\bn \bn_j^2 \E_V \abr{\wt{f}_{V,\bn}}^2.
\ee
A second differentiation and application of Parseval's theorem yields
\be
\E_V \E_\but \br{\frac{\pd^2 f_V}{\pd \but_j^2}}^2 = \sum_\bn \bn_j^4 \E_V \abr{\wt{f}_{V,\bn}}^2.
\ee

Applying the Cauchy--Schwarz inequality to the r.h.s. of \Cref{eq:partialSq}, we have
\ba
\E_V \E_{\but} \br{\frac{\pd f_V}{\pd \but_j}}^2 &\leq \sqrt{\br{\sum_\bn \bn_j^4 \E_V |\wt{f}_{V,\bn}|^2}\br{\sum_\bn \E_V |\wt{f}_{V,\bn}|^2}} \\
&=\sqrt{\br{\E_V \E_\but \br{\frac{\pd^2 f_V}{\pd \but_j^2}}^2}\br{\E_V \E_\but f(\but)^2}} \\
&\leq 4 \| H \|  \|A_j\|^2 \sqrt{\E_V \E_\but f(\but)^2},
\ea
where in the last line we used the fact that $\frac{\pd^2 f_V}{\pd \but_j^2} \leq 4 \| H \|  \|A_j\|^2$, which can be seen by analytically differentiating the expression for $f_V(\but)$. This suffices to prove the upper bound. For the lower bound, returning again to \Cref{eq:partialSq}, we have
\be
\E_V \E_{\but} \br{\frac{\pd f_V}{\pd \but_j}}^2 \geq \sum_{\bn\ :\ \bn_j \neq 0} \E_V \abr{ \wt{f}_{V,\bn}}^2,
\ee
and so
\ba
\E_V \E_{\but} \| \nabla f_V \|^2 &\geq \sum_{\bn\ :\ \bn \neq \bze} \E_V \abr{\wt{f}_{V,\bn}}^2 \\
&= \sum_{\bn} \E_V \abr{\wt{f}_{V,\bn}}^2 - \E_V \br{\E_\but f_V(\but)}^2 \\
&=  \E_V \E_\but f_V(\but)^2 - \E_V \br{\E_\but f_V(\but)}^2,
\ea
where in the second-to-last line we used the fact that $\E_\but f_V(\but) = \wt{f}_{V, \bze}$, and the last line follows from Parseval's theorem. We now consider the quantity $\E_\but f_V(\but)$. By linearity, we have
\be
\E_\but f_V(\but) = \sum_\bx c_\bx \E_\but \expval{U_V(\but)^\dag M_\bx U_V(\but)}{0}.
\ee
Now, the term in which $\bx = \bze$ (so that $M_\bx$ is the identity) clearly contributes $c_\bze$ to the sum. We claim that all other terms vanish. To see why, note that in any other term, there must exist some $i$ for which $\bx_i \neq 0$. Now, we may expand $M_{\bx_i} = \sum_{k=0}^{q-1}\sum_{l=0}^{q-1} d_{k,l} \Sigma_1^k \Sigma_3^l$ where $d_{k,l}$ are coefficients and $d_{0,0} = 0$ (due to orthogonality of the $M_i$). Observe furthermore that
\ba
\E_{\alpha \in_U [0,2\pi]} \E_{\beta \in_U [0,2\pi]}(\Sigma_1^{q \alpha/2\pi} \Sigma_3^{q \beta/2\pi})^\dag \Sigma_1^k \Sigma_3^l (\Sigma_1^{q \alpha/2\pi} \Sigma_3^{q\beta/2\pi}) &= \E_{\alpha \in_U [0,2\pi]} \E_{\beta \in_U [0,2\pi]} e^{i (k\alpha- l\beta)} \Sigma_1^k \Sigma_3^l \\
&= \delta_{k,0}\delta_{l,0} I,
\ea
where $\alpha\in_U [0,2\pi]$ means $\alpha$ is chosen uniformly in the range $[0,2\pi]$, and we have used the commutation relations $\Sigma_3 \Sigma_1^\alpha = e^{2\pi i \alpha/q} \Sigma_1^\alpha \Sigma_3$ and $\Sigma_3^\beta \Sigma_1 = e^{2\pi i \beta/q} \Sigma_1 \Sigma_3^\beta$. Recalling that the final two parameterized gates applied to any given site $i$ is the parameterized gate sequence $\Sigma_1^{q \alpha_i/2\pi} \Sigma_3^{q \beta_i/2\pi}$, we therefore conclude that $\E_\but f_V (\but) = c_\bze$ and so
\ba
\E_V \E_\but \| \nabla f \|^2 &\geq \E_V \E_\but f_V(\but)^2 - c_\bze^2 \\
&= \E_V \E_\but f_V(\but)^2,
\ea
where in the final line we have recalled that $c_\bze = 0$. Putting the upper and lower bounds together, we have shown the desired lemma.

\section{Relating $\E_V \E_\but f_V (\but)^2$ to random walks (proofs of \Cref{lem:expansion}, \Cref{thm:biased}, and \Cref{alg})}\label{app:markov}
We will make heavy use of unitary integration formulas for calculating quantities such as $\E_U [U^{\ot n} \ot \bar{U}^{\ot n}]$ where $U$ is a Haar-random unitary. In particular, letting $U$ be a Haar-random unitary of dimension $d$ the following result is known \cite{Collins_2006}:
\be
\E_U [U^{\ot n} \ot \bar{U}^{\ot n}] = \sum_{\s, \ta \in \cS_n} \wg^n_d (\s, \ta) (I\ot P^n_d(\s)) \proj{\Phi^n_d} (I \ot P^n_d(\ta))^\dg,
\ee
where $\wg^n_d$ is the Weingarten function \cite{Collins_2003,Collins_2006}, $P_d^n(\p)$ is the permutation operator 
\be
P_d^n(\p) = \sum_{i_1, \dots, i_n \in [d]} \ket{i_{\pi^{-1}(1)}, \dots, i_{\pi^{-1}(n)}}\bra{i_1, \dots, i_n},
\ee and $\ket{\Phi_d^n}$ is $n$ copies of the non-normalized maximally entangled state:
\be
\ket{\Phi_d^n} \defeq \sum_{i_1, \dots, i_n \in [d]^n} \ket{i_1, \dots, i_n}\ket{i_1, \dots, i_n}.
\ee
We also define $\ket{\p_d^n} \defeq (I\ot P_d^n(\p)) \ket{\Phi_d^n}$. Specializing to the case in which $U$ is a 2-local Haar-random unitary acting on two qudits, $i$ and $j$, each of dimension $q$, this expression may be written
\be
\E_U [U^{\ot n} \ot \bar{U}^{\ot n}] = \sum_{\s, \ta \in \cS_n} \wg^n_{q^2} (\s, \ta) \ket{\sigma_q^n}_i\ket{\sigma_q^n}_j\bra{\ta_q^n}_i\bra{\ta_q^n}_j.
\ee
In particular, we will utilize the following equations which hold when $U$ is a unitary 2-design.
\ba
\E_U \br{U \ot \bar{U}} &= \frac{1}{q^2}\ket{I^1_q}_i\ket{I^1_q}_j\bra{I^1_q}_i\bra{I^1_q}_j \label{eq:n1integral} \\
\E_U \br{U \ot U \ot \bar{U} \ot \bar{U}} &= \frac{1}{q^4-1} \br{\ket{I^2_d}_i\ket{I^2_q}_j\bra{I^2_q}_i\bra{I^2_q}_j+\ket{S^2_q}_i\ket{S^2_q}_j\bra{S^2_q}_i\bra{S^2_q}_j} \\
&- \frac{1}{q^2(q^4-1)}\br{\ket{I^2_q}_i\ket{I^2_q}_j\bra{S^2_q}_i\bra{S^2_q}_j+\ket{S^2_q}_i\ket{S^2_q}_j\bra{I^2_q}_i\bra{I^2_q}_j}, \label{eq:n2integral} \notag
\ea
where $S$ denotes the swap permutation (i.e. cyclic permutation on two elements), and we have used the fact that $\wg^2_{q^2}(I) = \frac{1}{q^4-1}$ and $\wg^2_{q^2}(S) = -\frac{1}{q^2(q^4-1)}$.
Note that
\ba
\braket{I^1_q}{I^1_q} &= q \\
\braket{I^2_q}{I^2_q} &= \braket{S^2_q}{S^2_q} = q^2 \\
\braket{I^2_q}{S^2_q} &= q \\
\braket{I^1_q}{0^2} &= \braket{I^2_q}{0^4} = \braket{S^2_q}{0^4} = 1 \\
\bra{I_q^1}\br{M_i \ot I} \ket{I_q^1} &= \tr(M_i) \\
\bra{I^2_q}\br{M_{i} \ot M^\dg_{j} \ot I \ot I} \ket{I^2_q} &= \tr(M_i) \tr(M_{j}^\dg) = q^2 \delta_{i, j} \delta_{i,0} \label{eq:IMMI} \\
\bra{I^2_q}\br{M_{i} \ot M^\dg_{j} \ot I \ot I} \ket{S^2_q} &= \tr(M_i M_j^\dg) = q \delta_{i,j}. \label{eq:IMMS}
\ea

We warm up by computing the first moment $\E_V \E_\but f_V(\but)$. In general, for calculations that only involve first or second moments of $f_V(\but)$ we may simplify the circuit under consideration. Namely, since the entangling gates form unitary 2-designs, and the 2-design property is preserved under application of a fixed unitary, we may ignore the parameterized gates. Note that in making this simplification we utilized the constraint on the locations of the parameterized gates with respect to the entangling gates. We then have, letting $V_1, \dots, V_m$ denote the $m$ entangling gates,
\ba
\E_V \E_\but \expval{U^\dg(\but)HU(\but)}{0^n} &= \E_V \expval{V_1^\dg \dots V_m^\dg H V_m \dots V_1}{0^n} \\
&= \E_V \sum_\bx c_\bx \bra{I^1_{q^n}}\br{M_\bx \ot I}\br{V_m \ot \bar{V}_m} \cdots \br{V_1 \ot \bar{V}_1} \br{\ket{0^n} \ot \ket{0^n}}.
\ea
Since the gates are iid, we perform the average over each separately. Upon integrating each Haar-random unitary gate using \Cref{eq:n1integral}, we find
\be 
\E_V \E_\but f(\but) =  q^{-n} \sum_{\bx} c_{\bx} \prod_{j=1}^n \bra{I_q^1}\br{M_{\bx_j} \ot I} \ket{I_q^1} =  c_\bze = 0.
\ee

Our next step is to compute the second moment of the objective function value: $\E_\but \E_V f_V(\but)^2$. We find that the relevant techniques for doing so are, in fact, very similar to the techniques used by Dalzell et al. \cite{dalzell2020random} to study anti-concentration, particularly in studying the quantity $\E_V |\expval{V}{0^n}|^4$, where $V$ is a circuit consisting of Haar-random 2-local gates. Most notably, the mappings to classical random walks derived in the remainder of this section were essentially derived previously in that work. In particular, a comparison with their work shows that the combinatorial expansions for the two quantities are identical in the bulk of the circuit, but differ in how the late-time boundary conditions are treated. Or equivalently, the same random walks are relevant in studying both quantities, but the walks are related to the two quantities in somewhat different ways. Consequently, the analysis below involved in expressing $\E_\but \E_V f_V(\but)^2$ in terms of random walks is partially a re-derivation of their work, although our presentation and some aspects of the derivation are different. Proceeding, we have
\ba
\E_V \E_\but f(\but)^2 &= \E_V \expval{V_1^\dg \cdots V_m^\dg H V_m \cdots V_1}{0^n}^2 \\
&= \E_V \sum_{\bx, \bx'} c_\bx \bar{c}_{\bx'}  \expval{V_1^\dg \cdots V_m^\dg M_{\bx} V_m \cdots V_1}{0^n}  \expval{V_1^\dg \cdots V_m^\dg M^\dg_{\bx'} V_m \cdots V_1}{0^n} \\
&= \E_V \sum_{\bx, \bx'} c_\bx \bar{c}_{\bx'} \bra{I_{q^n}^2}\br{M_\bx \ot M_{\bx'}^\dg \ot I \ot I}\br{V_m \ot V_m \ot \bar{V}_m \ot \bar{V}_m}\cdots \br{V_1 \ot V_1 \ot \bar{V}_1 \ot \bar{V}_1}\ket{0}^{\ot 4n} \\
&= \E_V \sum_{\bx} |c_\bx |^2 \bra{I_{q^n}^2}\br{M_\bx \ot M_\bx^\dg \ot I \ot I}\br{V_m \ot V_m \ot \bar{V}_m \ot \bar{V}_m}\cdots \br{V_1 \ot V_1 \ot \bar{V}_1 \ot \bar{V}_1}\ket{0}^{\ot 4n} \\
&= \sum_\bx |c_\bx |^2 g_\bx,
\ea
where the cross-terms vanished due to \Cref{eq:IMMI} and \Cref{eq:IMMS}, proving \Cref{lem:expansion}. We now describe how $\E_V  \bra{I_{q^n}^2}\br{M_\bx \ot M_\bx^\dg \ot I \ot I}\br{V_m \ot V_m \ot \bar{V}_m \ot \bar{V}_m}\cdots \br{V_1 \ot V_1 \ot \bar{V}_1 \ot \bar{V}_1}\ket{0}^{\ot 4n} =: g_\bx$ may be computed combinatorially. First, for convenience we define $W_i := \E_{V_i} \br{V_i \ot V_i \ot \bar{V}_i \ot \bar{V}_i}$, so that
\be
g_\bx = \bra{I^2_{q^n}} \br{M_\bx \ot M_\bx^\dg \ot I \ot I} W_m \cdots W_1 \ket{0^{4n}}.
\ee
We now note that the starting state $\ket{0^{4n}}$ in this expression may be replaced by $q^{-n}(q+1)^{-n} \br{\ket{I^2_q}+\ket{S^2_q}}^{\ot n}$, as a consequence of the fact that
\ba
q^{-1}(q+1)^{-1} \bra{I^2_q}\br{\ket{I^2_q}+\ket{S^2_q}} = \braket{I^2_q}{0^4} = 1, \\
q^{-1}(q+1)^{-1} \bra{S^2_q}\br{\ket{I^2_q}+\ket{S^2_q}} = \braket{S^2_q}{0^4} = 1.
\ea	
It is furthermore straightforward to compute the action of $W$ on a pair of sites, where below we are suppressing subscripts and superscripts for simplicity:
\ba
W\ket{II} &= \ket{II} \\
W\ket{SS} &= \ket{SS} \\
W\ket{SI} &= W\ket{IS} = \frac{q}{q^2+1} \br{\ket{II}+\ket{SS}}.
\ea
Given these rules, it is now clear that $g_\bx$ may be expressed as a certain partition function. More precisely, borrowing some notation from \cite{dalzell2020random} we may express $g_\bx$ as a sum over \emph{trajectories} $\gamma = (\vg^{0}, \vg^1, \dots, \vg^m)$, where $\vg^t$ is an assignment of binary labels --- either $\mtt{I}$ or $\mtt{S}$ --- to each site at timestep $t$, which corresponds to the timestep after the application of gate $t$ but before that of gate $t+1$. $\vg^0$ is the initial assignment of labels before the application of any gates. We only consider trajectories which are \emph{valid}, defined as follows. A site's label may only change when acted on by a gate. Furthermore, labels may only change according to the following rules. If a gate acts on a $(\mtt{I}, \mtt{I})$ pair or a $(\mtt{S}, \mtt{S})$ pair, the configuration of that pair is left unchanged. If a gate acts on a $(\mtt{I}, \mtt{S})$ pair or a $(\mtt{S}, \mtt{I})$ pair, the subsequent configuration of this pair may be either $(\mtt{I}, \mtt{I})$ pair or a $(\mtt{S}, \mtt{S})$.

We furthermore associate a weight with any valid trajectory $\g$ as follows. Whenever a gate acts on a pair of sites with differing labels, a multiplicative factor of $\frac{q}{q^2+1}$ is incurred. There is also a ``boundary'' factor associated with an interaction between the final configuration of the trajectory and $\bx$. Defining $\supp(\bx)$ to be the set of sites $i$ for which $\bx_i \neq 0$, $\supp(\vg^m)$ to be the set of sites $i$ for which $\vg^m_i = \mtt{S}$, and $| \vg^m |$ to be the number of sites with label $\mtt{S}$ in the final configuration, $\vg^m$ is associated with a factor of $\mathbbm{1}_{\supp(\bx) \subseteq \supp(\vg^m)}q^{2n-|\vg^m|}$. Finally, there is a constant factor of $q^{-n}(q+1)^{-n}$ associated with the $t=0$ boundary. Putting these together, the weight of a valid trajectory $\g$ is
\be\label{eq:weight}
\weight_\bx (\g) = \br{\frac{q}{q+1}}^n q^{-|\vg^m|} \br{\frac{q}{q^2+1}}^{\flips(\g)} \mathbbm{1}_{\supp(\bx) \subseteq \supp(\vg^m)},
\ee
where $\flips(\g)$ denotes the number of times the configuration of a site flips over the course of the trajectory $\g$. We finally have the desired expression
\be
g_\bx = \sum_\g \weight_\bx (\g).
\ee
\subsection{Expressing $g_\bx$ as a biased random walk}
We now relate $g_\bx$ to the biased random walk $\cM_{\wt{U}}$ over configurations $\{\mtt{I},\mtt{S}\}^n$, which is exactly the same ``biased random walk'' studied in \cite{dalzell2020random}. Namely, let $P_b$ denote a distribution over trajectories $\gamma$ as follows. We will use $\vG^t$ to denote a random configuration of sites at timestep $t$, and $\Gamma$ to denote a random trajectory. Each site of the starting configuration $\vG^0$ is independently chosen to be $\mtt{S}$ with probability $1/(q+1)$ and $\mtt{I}$ with probability $q/(q+1)$. Subsequent configurations $\vG^1, \dots, \vG^m$ are distributed as follows. Whenever a gate acts on a pair of sites in the same configuration, their configuration is preserved with probability $1$. Whenever a gate acts on a pair of sites $(\mtt{I}, \mtt{S})$ in differing configurations, with probability $1/(q^2+1)$ the $\mtt{I}$ label is flipped to $\mtt{S}$, and with probability $q^2/(q^2+1)$ the $\mtt{S}$ label is flipped to $\mtt{I}$. It then may be verified through straightforward calculation that
\be
\weight_\bx(\g) = P_b(\gamma)\mathbbm{1}_{\supp(\bx) \subseteq \supp(\vg^m)},
\ee
so that, as desired,
\be
g_\bx = \Pr_{\Gamma \sim P_b} \sbr{\supp(\bx) \subseteq \supp(\vG^m)},
\ee
proving \Cref{thm:biased}.

We see now that there is a simple Monte Carlo algorithm for producing an unbiased estimate of $\E_V \E_\but f_V(\but)^2$. First, sample $\mathbf{X} \in [q]^n$ according to the distribution $\Pr[\mathbf{X} = \bx] = \frac{|c_\bx|^2}{\sum_\bx |c_\bx|^2}$. Second, sample a trajectory $\Gamma$ by simulating the biased random walk described above. Finally, output $\sum_\bx |c_\bx|^2$ if  $\supp(\vec{\Gamma}^{m}) \supseteq \supp(\bx)$ and zero otherwise. This requires $O(n+m)$ elementary operations (where $m$ is the number of gates) and produces an unbiased estimator. From a standard Chernoff bound, we know that to estimate the desired quantity with additive error at most $\eps$ and with failure probability at most  $\delta$, it suffices to take
\be
N = \frac{1}{2}\br{\frac{\sum_\bx |c_\bx |^2}{\epsilon}}^2 \log \br{\frac{2}{\delta}}
\ee
samples, proving the correctness of \Cref{alg}.

\subsection{Expressing $g_\bx$ as an unbiased random walk}
It will be convenient here to also associate a different, unbiased random walk over $\{\mtt{I}, \mtt{S} \}^n$ to the quantity $g_\bx$, with distribution over trajectories $\g$ we denote by $P_u$. This unbiased walk is exactly the ``unbiased walk'' originally derived and studied in Dalzell et al. \cite{dalzell2020random}. In the unbiased random walk $P_u$, each site of the starting configuration $\vG^0$ is independently chosen to be $\mtt{I}$ or $\mtt{S}$ with probability $1/2$ each. Subsequent configurations $\vG^1, \dots, \vG^m$ are distributed as follows. Whenever a gate acts on a pair of sites in the same configuration, their configuration is preserved with probability $1$, as in the biased random walk of the previous section. But, whenever a gate acts on a pair of sites $(\mtt{I}, \mtt{S})$ in differing configurations, with probability $1/2$ the $\mtt{I}$ label is flipped to $\mtt{S}$, and with probability $1/2$ the $\mtt{S}$ label is flipped to $\mtt{I}$. It then holds that
\be
\weight_\bx(\g) = P_u(\gamma)\br{\frac{2q}{q+1}}^n \br{\frac{2q}{q^2+1}}^{\flips(\g)} q^{-|\vG^m|} \mathbbm{1}_{\supp(\bx) \subseteq \supp(\vg^m)},
\ee
so that
\ba
g_\bx &= \E_{\G \sim P_u} \sbr{\text{val}(\G) \mathbbm{1}_{\supp(\bx) \subseteq \supp(\vg^m)}},\ \text{with} \label{eq:unbiased} \\
\text{val}(\g) &\defeq \br{\frac{2q}{q+1}}^n \br{\frac{2q}{q^2+1}}^{\flips(\g)} q^{-|\vG^m|}. 
\ea

\section{Proofs of corollaries}\label{app:cor_proofs}
\subsection{\Cref{cor:general_upper}}\label{app:upper}
We now derive an upper bound on  $g_\bx$. To do so, it will be helpful to define a trajectory $\g = (\vg^0, \dots, \vg^m)$ to be \emph{fixed} if all sites of $\vg^m$ are in the same configuration ($\mtt{I}^n$ or $\mtt{S}^n$) (i.e. if it reaches a fixed point). Now we may write
\begin{align}
g_\bx &= \sum_{\text{non-fixed }\g} \weight_\bx (\g) + \sum_{\text{fixed }\g} \weight_\bx (\g) \\
&=  \Pr_{\G\sim P_u}(\G \text{ not fixed})\E_{\G \sim P_u} \sbr{\text{val}(\G)\mathbbm{1}_{\supp(\bx) \subseteq \supp(\vg^m)}\;\middle\vert\; \Gamma\text{ not fixed}} + \Pr_{\G\sim P_b}(\vG^m = \mtt{S}^n) \\
&\leq \E_{\G \sim P_u} \sbr{\text{val}(\G)\mathbbm{1}_{\supp(\bx) \subseteq \supp(\vg^m)}\;\middle\vert\; \Gamma\text{ not fixed}} + \Pr_{\G\sim P_b}(\vG^m = \mtt{S}^n).
\end{align}

We bound each of these terms separately. For the latter term, we use the following result from Dalzell et al. \cite{dalzell2020random}.
\begin{lemma}[Follows directly from Corollary 2 of \cite{dalzell2020random}]\label{lem:dalzell}
Conditioned on $\vG^0$ having exactly $x$ sites in the $\mtt{S}$ configuration and $n-x$ sites in the $\mtt{I}$ configuration, the probability that the final configuration $\vG^m$ of the biased walk $P_b$ is $\mtt{S}^n$ is upper bounded by $\frac{q^{-2n + 2x}}{1-q^{-2n}} (1-q^{-2x})$. 
\end{lemma}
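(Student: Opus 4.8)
The plan is to prove the lemma by a one‑line martingale computation, which amounts to a self‑contained rederivation of the relevant part of Corollary~2 of \cite{dalzell2020random}. The key observation is that under the biased walk $P_b$, the process $M_t \defeq q^{2|\vG^t|}$ is a martingale with respect to its natural filtration, where $|\vG^t|$ is the number of sites carrying the label $\mtt{S}$ at timestep $t$. Indeed, each gate acts on a fixed pair of sites: if the pair is in the same configuration then $|\vG^t|$ is unchanged, and if the pair is in differing configurations $(\mtt{I},\mtt{S})$ then $|\vG^{t+1}|$ equals $|\vG^t|+1$ with probability $1/(q^2+1)$ and $|\vG^t|-1$ with probability $q^2/(q^2+1)$, so that
\be
\frac{1}{q^2+1}\,q^{2(|\vG^t|+1)} + \frac{q^2}{q^2+1}\,q^{2(|\vG^t|-1)} = q^{2|\vG^t|}
\ee
(equivalently, $1$ and $q^2$ are the two roots of $a^2-(q^2+1)a+q^2=0$). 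Hence $\E[M_{t+1}\mid \vG^t]=M_t$.

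Since the walk has a finite horizon of $m$ gate steps and $M_t$ takes values in $[1,q^{2n}]$, iterating the martingale identity gives
\be
\E\sbr{q^{2|\vG^m|} \;\middle\vert\; |\vG^0|=x} = q^{2x},
\ee
and this holds for every fixed initial configuration with exactly $x$ sites in the $\mtt{S}$ state and for every circuit architecture. To finish, I would lower bound the left‑hand side by isolating the term coming from the configuration $\mtt{S}^n$ (for which $|\vG^m|=n$) and bounding the contribution of every other final configuration from below by $q^0=1$:
\be
q^{2x} \;\geq\; q^{2n}\,\Pr\sbr{\vG^m=\mtt{S}^n} + \br{1-\Pr\sbr{\vG^m=\mtt{S}^n}}.
\ee
Solving for the probability gives $\Pr[\vG^m=\mtt{S}^n]\leq (q^{2x}-1)/(q^{2n}-1)$, which is exactly the stated bound $\frac{q^{-2n+2x}}{1-q^{-2n}}(1-q^{-2x})$ after multiplying numerator and denominator by $q^{2n}$.

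I do not expect any real obstacle. The only points that need minor care are verifying that $q^{2|\vG^t|}$ is a martingale and noting that, because $M_t$ is bounded and the horizon is finite, the iterated‑expectation step is immediate and no appeal to the optional stopping theorem is required. As an alternative, one could simply cite Corollary~2 of \cite{dalzell2020random} directly and translate its statement into the present notation, but the martingale argument above is shorter and makes the source of the bound transparent.
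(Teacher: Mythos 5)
Your proof is correct. Note that the paper itself supplies no proof of this lemma: it is imported verbatim from Corollary 2 of \cite{dalzell2020random}, so your contribution is a self-contained rederivation rather than a variant of an argument in the text. The derivation checks out in every detail: $q^{2|\vG^t|}$ is indeed a martingale for the biased walk, since a gate acting on a discordant pair sends $|\vG^t|\mapsto|\vG^t|\pm 1$ with probabilities $1/(q^2+1)$ and $q^2/(q^2+1)$, and $q^2\cdot\frac{1}{q^2+1}+q^{-2}\cdot\frac{q^2}{q^2+1}=1$; the identity holds for every fixed gate sequence because the transition law depends only on the current configuration of the pair being acted on, so conditioning on $|\vG^0|=x$ (rather than on a single configuration) is harmless by averaging. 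Your finite-horizon handling is also right: since $\mtt{S}^n$ contributes $q^{2n}$ and every other terminal configuration contributes at least $q^0=1$, you get $q^{2x}\geq q^{2n}p+(1-p)$ and hence $p\leq(q^{2x}-1)/(q^{2n}-1)$, which is algebraically identical to the stated bound (and correctly reproduces the edge cases $x=0$ and $x=n$). This is the standard gambler's-ruin martingale, which is also morally what underlies the cited corollary; what your version buys is that the lemma no longer depends on translating notation from an external reference, at the cost of a short verification that the reader must do themselves.
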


From this lemma and the definition of $P_b$, we may straightforwardly bound $\Pr_{\Gamma \sim P_b} (\vG^m = \mtt{S}^n)$.

\begin{lemma}\label{lem:prob_fixed}
\be
\Pr_{\Gamma \sim P_b} (\vG^m = \mtt{S}^n) \leq (q^n+1)^{-1}.
\ee
\end{lemma}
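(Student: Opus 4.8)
The plan is to condition on the initial configuration and then reduce everything to a single binomial sum that collapses via an elementary algebraic identity. First I would note that in $P_b$ the number $x$ of sites carrying label $\mtt{S}$ in $\vG^0$ is distributed as a Binomial with $n$ trials and success probability $p \defeq 1/(q+1)$, since the sites are initialized independently. Conditioning on the value of $x$ and invoking \Cref{lem:dalzell} for the conditional probability that $\vG^m = \mtt{S}^n$, I would write
\be
\Pr_{\Gamma \sim P_b}(\vG^m = \mtt{S}^n) \leq \frac{q^{-2n}}{1-q^{-2n}}\sum_{x=0}^n \binom{n}{x} p^x (1-p)^{n-x} q^{2x}\br{1 - q^{-2x}}.
\ee

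Next I would simplify the sum. Writing $p^x q^{2x}(1-q^{-2x}) = (pq^2)^x - p^x$, the sum splits as $\sum_x \binom{n}{x}(pq^2)^x(1-p)^{n-x} - \sum_x \binom{n}{x} p^x (1-p)^{n-x} = (1-p+pq^2)^n - 1$ by the binomial theorem. The key observation is that, with $p = 1/(q+1)$, we have $1-p+pq^2 = 1 + p(q^2-1) = 1 + (q-1) = q$, so the sum equals exactly $q^n - 1$. Substituting back,
\be
\Pr_{\Gamma \sim P_b}(\vG^m = \mtt{S}^n) \leq \frac{q^{-2n}(q^n - 1)}{1 - q^{-2n}} = \frac{q^n - 1}{q^{2n} - 1} = \frac{1}{q^n + 1},
\ee
which is the claimed bound.

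There is no serious obstacle here; the only point requiring a moment's care is recognizing the identity $1 - p + pq^2 = q$ that makes the binomial series telescope, and the final algebraic simplification $\tfrac{q^n-1}{q^{2n}-1} = \tfrac{1}{q^n+1}$. Everything else is a routine application of \Cref{lem:dalzell} and the binomial theorem. (One should also check the edge behavior: the bound from \Cref{lem:dalzell} correctly gives $0$ when $x = 0$, consistent with the fact that the all-$\mtt{I}$ configuration is a fixed point of the walk, so the $x=0$ term contributes nothing, as reflected in the $(1-q^{-2x})$ factor vanishing there.)
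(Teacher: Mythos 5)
Your proposal is correct and takes essentially the same approach as the paper: condition on the number of initial $\mathtt{S}$-sites (Binomial with parameter $1/(q+1)$), apply \Cref{lem:dalzell}, and evaluate the resulting binomial sum, which collapses to $(q^n+1)^{-1}$. Your simplification via $1-p+pq^2=q$ is just a cosmetic repackaging of the paper's direct summation of $\binom{n}{h}(q^h-q^{-h})$, so there is no substantive difference.
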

\begin{proof}
Recalling that each site of $\vG^0$ is chosen independently to be $\mtt{S}$ with probability $1/(q+1)$ and $\mtt{I}$ with probability $q/(q+1)$, we have
\ba
\Pr_{\Gamma \sim P_b} (\vG^m = \mtt{S}^n) &\leq \sum_{h = 0}^n \frac{q^{n-h}}{(q+1)^n}\frac{q^{-2n+2h}}{1-q^{-2n}}(1-q^{-2h}) \\
&= \frac{q^n}{(q+1)^n (q^{2n}-1)} \sum_{h=0}^n \binom{n}{h}(q^h-q^{-h}) \\
&= \frac{q^n\sbr{\br{q+1}^n - (q^{-1}+1)^n}}{(q+1)^n (q^{2n}-1)} \\
&= (q^n+1)^{-1},
\ea
where in the first line we used \Cref{lem:dalzell}.
\end{proof}

We next bound $\E_{\G \sim P_u} \sbr{\text{val}(\G)\;\middle\vert\; \Gamma\text{ not fixed}}$. For an $r$-regularly connected architecture, at least one gate is applied between any proper subset of sites and its complement every $r$ layers. It therefore holds that for any trajectory $\g$ in an $r$-regularly connected architecture, $\flips(\g) \geq \lfloor d/r \rfloor$, from which we have
\be
\text{val}(\g)\mathbbm{1}_{\supp(\bx) \subseteq \supp(\vg^m)} \leq \br{\frac{2q}{q+1}}^n \br{\frac{2q}{q^2+1}}^{\lfloor d/r \rfloor} q^{-|\bx|}
\ee
for any non-fixed $\g$, where we also used the fact that $|\vg^m | \geq | \bx |$ for any $\g$ satisfying $\supp(\bx) \subseteq \supp(\vg^m)$. Putting this together, we now  may write
\ba
g_\bx &\leq \E_{\G \sim P_u} \sbr{\text{val}(\G)\mathbbm{1}_{\supp(\bx) \subseteq \supp(\vg^m)}\;\middle\vert\; \Gamma\text{ not fixed}} + \Pr_{\G\sim P_b}(\vG^m = \mtt{S}^n) \\
&\leq \br{\frac{2q}{q+1}}^n \br{\frac{2q}{q^2+1}}^{\lfloor d/r \rfloor} q^{-|\bx|} + (q^n+1)^{-1},
\ea
from which the desired result follows.

\subsection{\Cref{cor:1d_lower}}\label{app:lower1D}
We now derive an improved lower bound on $g_\bx$ in the 1D setting where $M_\bx$ is spatially $k$-local. Let $\cS$ denote a set of $k$ adjacent sites in which the support of $\bx$ is contained. Now, $g_\bx$ is lower bounded by (1) the probability that $\cS \subseteq \supp(\Gamma^0)$ \emph{and} no spin in $\cS$ ever flips to configuration $\mtt{I}$, as well as (2) the probability that $\vG^0 = \mtt{S}^n$, yielding
\be\label{eq:g_general_lower}
g_\bx \geq \max\cbr{\br{\frac{1}{q+1}}^{k}\br{\frac{1}{q^2+1}}^{\text{gates}(\cS)}, \br{\frac{1}{q+1}}^n},
\ee
where $\text{gates}(\cS)$ denotes the number of entangling gates acting between $\cS$ and its complement. Noting that $\text{gates}(\cS) \leq d+1$, we have
\be
g_\bx \geq \max\cbr{\br{\frac{1}{q+1}}^{k}\br{\frac{1}{q^2+1}}^{d+1}, \br{\frac{1}{q+1}}^n},
\ee
from which \Cref{cor:1d_lower} follows.
\subsection{\Cref{cor:1d_upper}}\label{app:upper1D}
Our goal is to upper bound $g_\bx = \sum_\gamma \weight_\bx (\gamma)$ in the 1D setting. As before, $M_\bx$ is assumed to be $k$-local, with the support of $\bx$ being contained within the $k$ adjacent sites $\cS$. A first observation is that, in computing $g_\bx$, we need only consider the backwards lightcone of $\cS$. Let $n'$ denote the qudits in the backwards lightcone, and $U'(\but)$ denote the restriction of the original circuit $U(\but)$ to only those gates which act on these $n'$ qudits. Note that $n' \leq \min(n, k+2d)$. For the remainder of this section we restrict to $U'(\but)$ acting on these $n'$ sites. We then make a similar calculation to that made in \cite{dalzell2020random} to obtain anti-concentration bounds for 1D random circuits. Following their approach, we proceed by considering the simple domain wall structure of valid trajectories. Namely, when an entangling gate ``acts'' on a domain wall, the wall is pushed one unit to the left or right (depending on whether the output of the gate is $(\mtt{S}, \mtt{S})$ or $(\mtt{I}, \mtt{I})$). Two domain walls can collide and annihilate, but no new domain walls can be created. We refer the reader to Section V of \cite{dalzell2020random} for a more detailed discussion of the valid domain wall structure, and a more detailed explanation of the trajectory decomposition we describe presently.

Let $\cC$ denote the set of \emph{conserved trajectories}, defined as the set of trajectories for which the number of domain walls is conserved throughout the course of the trajectory; equivalently, $\g$ is a conserved trajectory if $\vg^0$ has the same number of domain walls as $\vg^m$. We then have the following key observation. For any valid trajectory $\g$, we may uniquely associate a pair of valid trajectories $(c_\g, f_\g)$ where $c_\g \in \cC$ is conserved and $f_\g$ is fixed, and where $c_\g$ ($f_\g$) corresponds to the conserved (non-conserved) domain walls of $\g$. More precisely, $c_\g$ is the conserved trajectory whose domain wall trajectories matches the trajectories of the conserved domain walls of $\g$ (i.e. those domain walls which are not annihilated over the course of the trajectory) and which satisfies $\vec{c}_\g^m = \vg^m$. $f_\g$ is the trajectory whose domain wall structure matches that of the non-conserved domain walls of $\g$ and has fixed point $\mtt{I}^{n'}$. Note that the mapping $\g \rightarrow (c_\g, f_\g)$ is injective. The weight of a valid trajectory $\g$ then decomposes as follows:
\be
\weight_\bx(\g) = \br{\frac{q+1}{q}}^{n'} \weight_\bx(c_\g) \weight_\bze(f_\g).
\ee
Intuitively such a decomposition is possible because, apart from the boundary contribution associated with $\bx$, the weight of a trajectory is solely determined by its domain wall structure, with domain walls contributing multiplicatively to the weight according to their length. We now proceed with bounding $g_\bx$, introducing the notation $\cC_{\geq 1}$ to denote the set of conserved trajectories with at least one domain wall.

\ba
g_\bx &= \sum_{\text{fixed }\g} \weight_\bx (\g) + \sum_{\text{non-fixed }\g} \weight_\bx (\g) \\ 
&\leq (q^{n'}+1)^{-1} + \sum_{\text{non-fixed }\g} \weight_\bx (\g) \\
&\leq (q^{n'}+1)^{-1} + \br{\frac{q+1}{q}}^{n'} \br{\sum_{\g \in \cC_{\geq 1}} \weight_\bx (\gamma)}\br{\sum_{\text{fixed }\gamma} \weight_\bze (\gamma)} \\
&\leq (q^{n'}+1)^{-1}  + \br{\frac{q+1}{q}}^{n'} \sum_{\g \in \cC_{\geq 1}} \weight_\bx (\gamma) \\
&\leq  (q^{n'}+1)^{-1}  +  q^{-|\bx|} \sum_{l=1}^{n'} \binom{n'}{l} \br{\frac{2q}{q^2+1}}^{l(d-1)}  \\
&=  (q^{n'}+1)^{-1}  +  q^{-|\bx|} \br{\frac{2q}{q^2+1}}^{d-1} \sum_{l=0}^{n' - 1} \binom{n'}{l+1} \br{\frac{2q}{q^2+1}}^{l(d-1)} \\
&\leq  (q^{n'}+1)^{-1}  +  q^{-|\bx|} \br{\frac{2q}{q^2+1}}^{d-1} n' \sum_{l=0}^{n'} \binom{n'}{l} \br{\frac{2q}{q^2+1}}^{l(d-1)} \\
&=  (q^{n'}+1)^{-1}  + q^{-|\bx|} \br{\frac{2q}{q^2+1}}^{d-1} n' \br{1 + \br{\frac{2q}{q^2+1}}^{d-1}}^{n'} \\
&\leq (q^{n'}+1)^{-1}  + q^{-|\bx|} \br{\frac{2q}{q^2+1}}^{d-1} n' e^{n' \br{\frac{2q}{q^2+1}}^{d-1}} \\
&\leq (q^{n'}+1)^{-1}  + q^{-|\bx|} \br{\frac{2q}{q^2+1}}^{d-1} n' \cdot 1.1, \text{  for $d \geq d^* = \Theta(1) (\log k)$} \\
&\leq q^{-\min(n,k+2d)} + q^{-\Theta(1) \cdot (d-d^*) - |\bx |}, \text{  for $d \geq d^* = \Theta(1) (\log k)$} \\
&\leq q^{-n} + q^{-\Theta(1) \cdot (d-d^*) - |\bx |}, \text{  for $d \geq d^* = \Theta(1) (\log k)$}.
\ea
The second line is an application of \Cref{lem:prob_fixed}. The third line is an application of the trajectory decomposition described above. In the fourth line, we have used the fact that $\sum_{\text{fixed }\gamma} \weight_\bze (\gamma) \leq g_\bze = 1$. In the fifth line, we upper bound $\sum_{\gamma \in \cC_{\geq 1}} \weight_\bx (\gamma)$ by counting domain wall trajectories, following \cite{dalzell2020random}. For such trajectories which have $l$ domain walls, there are no more than $\binom{n'}{l}$ starting configurations for the domain walls. Each time a gate acts across a domain wall, a multiplicative factor of $q/(q^2+1)$ is incurred, and there are two possible ways the domain wall can subsequently move (left or right). Hence, the contribution to the weight from all trajectories with $l$ conserved domain walls consistent with the boundary conditions imposed by $\bx$ is upper bounded by $q^{-|\bx|} \binom{n'}{l}\br{\frac{2q}{q^2+1}}^{l(d-1)}$. In the final line we used the observation that $q^{-(k+2d)} \leq q^{-\Theta(1) \cdot (d-d^*) - |\bx |}$ for sufficiently small $\Theta(1)$, from which the desired expression follows assuming the constant $\Theta(1)$ in $d \geq d^* = \Theta(1) (\log k)$ is sufficiently large.

\bibliography{main}

\end{document}